\DeclareMathAlphabet{\mathpzc}{OT1}{pzc}{m}{it}
\newtheorem{theorem}{Theorem}
\newtheorem{lemma}{Lemma}
\newtheorem{definition}{Definition}
\newtheorem{remark}{Remark}
\newtheorem{prp}{Proposition}
\newtheorem*{prf*}{Proof}
\def\Ce{\mathbb{C}}
\def\Ic{\mathcal{I}}
\def\Lc{\mathcal{L}}
\def\Es{\mathscr{E}}
\DeclareMathOperator{\im}{im}
\DeclareMathOperator{\diag}{diag}
\DeclareMathOperator{\snr}{\mathrm{SNR}}
\DeclareMathOperator{\nmse}{\mathrm{nMSE}}
\DeclareMathOperator{\wa}{\mathrm{WA}}
\DeclareMathOperator{\db}{\si{\dB}}
\newcommand{\ts}{\textsuperscript}
\DeclareFontFamily{U}{BOONDOX-calo}{\skewchar\font=45 }
\DeclareFontShape{U}{BOONDOX-calo}{m}{n}{
  <-> s*[1.05] BOONDOX-r-calo}{}
\DeclareFontShape{U}{BOONDOX-calo}{b}{n}{
  <-> s*[1.05] BOONDOX-b-calo}{}
\DeclareMathAlphabet{\mathcalboondox}{U}{BOONDOX-calo}{m}{n}
\SetMathAlphabet{\mathcalboondox}{bold}{U}{BOONDOX-calo}{b}{n}
\DeclareMathAlphabet{\mathbcalboondox}{U}{BOONDOX-calo}{b}{n}
\def\Qb{{\boldsymbol{\mathcalboondox{Q}}}}
\begin{document}

\title{Shuffled Multi-Channel Sparse Signal Recovery}

\author{Taulant Koka, Manolis C. Tsakiris, Michael Muma and Benjamín Béjar Haro\thanks{Taulant Koka and Michael Muma are with the Robust Data Science Group at Technische Universit{\"a}t Darmstadt, Germany (e-mail: taulant.koka@tu-darmstadt.de; michael.muma@tu-darmstadt.de).}
\thanks{Manolis C. Tsakiris is with the Academy of Mathematics and Systems Science at the
Chinese Academy of Sciences, China (e-mail: manolis@amss.ac.cn).}
\thanks{Benjamín Béjar Haro is with the Swiss Data Science Center at the Paul Scherrer Institute, Switzerland (e-mail: benjamin.bejar@psi.ch).}\thanks{The work of the first and second author has been funded in part by the ERC Starting Grant ScReeningData under grant number 101042407.}}

\markboth{SUBMITTED TO IEEE TRANSACTIONS ON SIGNAL PROCESSING}
{Shell \MakeLowercase{\textit{et al.}}: Bare Demo of IEEEtran.cls for IEEE Journals}

\maketitle

\begin{abstract}
Mismatches between samples and their respective channel or target commonly arise in several real-world applications. For instance, whole-brain calcium imaging of freely moving organisms, multiple-target tracking or multi-person contactless vital sign monitoring may be severely affected by mismatched sample-channel assignments. To systematically address this fundamental problem, we pose it as a signal reconstruction problem where we have lost correspondences between the samples and their respective channels. Assuming that we have a sensing matrix for the underlying signals, we show that the problem is equivalent to a structured unlabeled sensing problem, and establish sufficient conditions for unique recovery. To the best of our knowledge, a sampling result for the reconstruction of shuffled multi-channel signals has not been considered in the literature and existing methods for unlabeled sensing cannot be directly applied. We extend our results to the case where the signals admit a sparse representation in an overcomplete dictionary (i.e., the sensing matrix is not precisely known), and derive sufficient conditions for the reconstruction of shuffled sparse signals. We propose a robust reconstruction method that combines sparse signal recovery with robust linear regression for the two-channel case. The performance and robustness of the proposed approach is illustrated in an application related to whole-brain calcium imaging. The proposed methodology can be generalized to sparse signal representations other than the ones considered in this work to be applied in a variety of real-world problems with imprecise measurement or channel assignment.
\end{abstract}

\begin{IEEEkeywords}
Unlabeled Sensing, Sparsity, Sampling, Cross-Channel Unlabeled Sensing.
\end{IEEEkeywords}

\IEEEpeerreviewmaketitle

\section{Introduction}
\label{sec:introduction}
\IEEEPARstart{T}{he} problem of reconstructing a signal without precise knowledge about the sample locations has recently received considerable attention from the research community, e.g.,  \cite{unnikrishnan_unlabeled_2018,dokmanic_permutations_2019,tsakiris_algebraic-geometric_2020,slawski_pseudo-likelihood_2021,tsakiris_homomorphic_2019,ashwin_pananjady_linear_2018,hsu_linear_2017,abid_stochastic_2018,peng_linear_2020,elhami_unlabeled_2017,peng_homomorphic_2021,xu_uncoupled_2019,abbasi_r-local_2021,yao_unlabeled_2021,wang_estimation_2020,slawski_linear_2019,abid_linear_2017,TSAKIRIS2023210,onaran2022shuffled,abbasi_r-local_2022}. In a discrete setting, such a problem is usually referred to as {\em unlabeled sensing} or {\em shuffled linear regression}. 

In shuffled linear regression, the sensing matrix is assumed to be known and the goal is to recover the right ordering (permutation matrix) together with the regression coefficients that generated the observations. Here, we consider a related problem where the shuffling of samples does not happen within the signal but rather across multiple channels of a multivariate signal (see Figs.~\ref{fig:general_setup} and \ref{fig:shuffled}), thus adding more structure to the problem due to the swapping of samples across channels only. Although the problem of imprecise sample-channel assignments is well-known within the literature and is addressed, for example, by probabilistic data association filters for multiple-target tracking \cite{Shalom2009probabilistic}, we take a fundamentally different approach, where we assume that the signals admit a parametric representation in a {\em generic} subspace, and look at the problem from a sampling perspective. To the best of our knowledge, such an approach has not been previously considered in the literature. The considered setup, broadly applies to situations where some quantity of interest is measured from multiple moving targets that may be difficult to tell apart due to partial occlusions, crossings or measurement noise (Fig.~\ref{fig:general_setup}~(a)). 
For instance, entomological radar for observing insect flight and migration \cite{long_entomological_2020}, fluorescence microscopy of moving cell cultures \cite{schenk_high-speed_2016}, contactless vital sign monitoring of groups of people or animals \cite{brijs_remote_2019}, \cite{liu_multi-task_2020},\cite{schroth2023emergency}, multiple-target tracking \cite{junkun_target_2021,teklehaymanot_adaptive_2017} or whole-brain calcium imaging of freely moving organisms such as {\em zebrafish} or {\em C. elegans}~\cite{cong_rapid_2017} constitute some paradigmatic examples (Fig.~\ref{fig:general_setup}~(b))

As it turns out, the problem of reconstructing shuffled multi-channel signals coincides with an unlabeled sensing formulation, where both the sensing matrix and the permutation matrix are highly structured. This structure, however, prevents existing unlabeled sensing results from being directly applicable, necessitating the development of new theory within the scope of this paper. We expand on the developed theory by also addressing the case, where the exact sensing matrix is not precisely known {\em a priori}, making our setup more general in the sense that the sensing matrix has to be inferred from the observations. Of course, without any prior knowledge about the signals of interest, the problem is hopeless. In our case, we rely on the assumption that each of the signals of interest (each channel) admits a sparse representation in an overcomplete dictionary with possibly infinitely many elements. This allows us to first retrieve the support of the signals and estimate the sensing matrix that can subsequently be used to solve the sample assignment and signal reconstruction problems.

\paragraph*{Main Contributions}(i) We formalize the shuffled multi-channel signal reconstruction problem as a structured unlabeled sensing problem, and provide sufficient conditions for unique recovery. (ii) We extend the recovery results to the case where the sensing matrix is not precisely known a priori by assuming that the signals admit a sparse representation on a dictionary over the continuum. (iii) We propose a two-step method for shuffled sparse signal reconstruction in the two-channel case that combines robust regression with sparse signal recovery to determine the correct sample-channel assignment and regression coefficients.
(iv) We demonstrate and benchmark the performance of the proposed methodology in numerical experiments for sparse signal reconstruction, and showcase practical applicability in experiments related to whole-brain calcium imaging in neuroscience. A python implementation of the proposed method will be made available on the authors github page.
\paragraph*{Organization}Section~\ref{sec:related_work} reviews the related work before the structured unlabeled sensing setup is introduced in Section~\ref{sec:sus}, where we derive conditions for unique recovery. Section~\ref{sec:sss} then builds upon these results to formalize a sampling theorem for the reconstruction of sparse shuffled signals, and a robust method for reconstruction of such signals in the two-channel case is established in Section~\ref{ssec:rob_est}. Finally, the numerical evaluation of the proposed methodology is detailed in Section~\ref{sec:numerical_experiments}. We leave Section~\ref{sec:conclusion} for conclusions and further directions.
\paragraph*{Notation} We use capital letters to denote frequency domain representations, calligraphic letters to denote sets, bold-faced calligraphic letters to denote tensors, and bold-faced letters to denote vectors and matrices, i.e., an $N$-dimensional vector is denoted as $\bm x = ( x_0,\: x_2,...,\:x_{N-1})^\top$ and an $N\times N$ matrix is denoted as $\bm X = \begin{pmatrix}\bm x_0&\bm x_1&\dots&\bm x_{N-1}\end{pmatrix}$. The $(i,j)$ entry of $\bm X$ is denoted as $[\bm X]_{ij}$. For a set of indexed elements $\{\Omega_1,\dots,\Omega_M\}$ we use $\{\Omega_m\}_{m=1}^M$ as shorthand. For vector spaces $\mathcal{V}$, $\mathcal{W}$, we write $\dim \mathcal{V}$ and $\dim \mathcal{W}$ for their respective dimensions, $\mathcal{V}\cap\mathcal{W}$ for their intersection, and $\mathcal{V}\cup\mathcal{W}$ for their union. We denote by $\ker(\varrho)$ the kernel and by $\im(\varrho)$ the image of some linear map $\varrho:\mathcal{V}\rightarrow\mathcal{W}$.
\begin{figure*}[t]
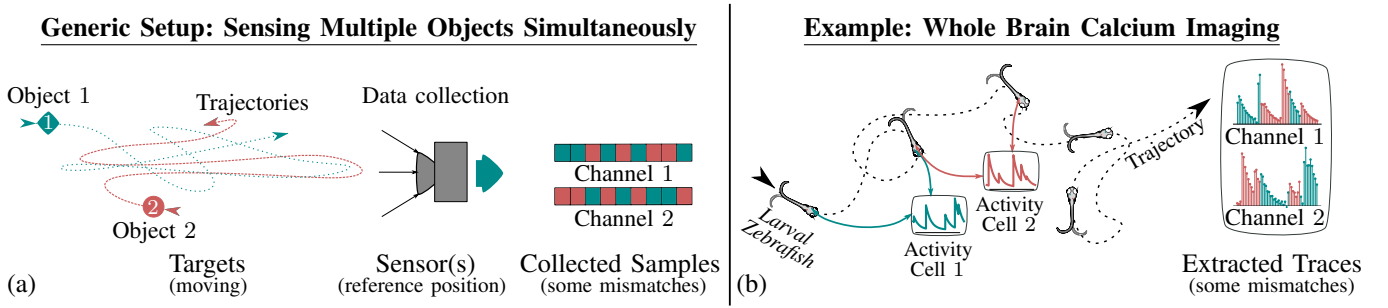

    \centering
     \begin{overpic}[width=0.99\textwidth,tics=2]{Figs/general_example_setup_no_text.pdf}
        \put (0,1) {(a)}
        \put (54,1) {(b)}
        \put (2.5,20) {\underline{\textbf{Generic Setup: Sensing Multiple Objects Simultaneously}}}
        \put (59,20) {\underline{\textbf{Example: Whole Brain Calcium Imaging}}}
        \put (2.65,13.05) {{\color{white}{\footnotesize $1$}}}
        \put (0,15) {\small Object $1$}
        \put (10.4,6.8) {{\color{white}{\footnotesize $2$}}}
        \put (7.8,5) {\small Object $2$}
        \put (12,2.5) {Targets}
        \put (12,1) {\footnotesize(moving)}
        \put (27.3,2.5) {Sensor(s)}
        \put (24.5,1) {\footnotesize(reference position)}
        \put (42,9.2) {\small Channel $1$}
        \put (42,5.9) {\small Channel $2$}
        \put (26.3,15) {\small Data collection}
        \put (38,2.5) {Collected Samples}
        \put (39.5,1) {\footnotesize (some mismatches)}
        \put (14,14.5) {\small Trajectories}
        \put (90.2,12) {\small Channel $1$}
        \put (90.2,6) {\small Channel $2$}
        \put (82.7,10) {\rotatebox{30}{\footnotesize Trajectory}}
        \put (55.6,6.5) {\rotatebox{-30}{\footnotesize {\em Larval}}}
        \put (54.2,5.6) {\rotatebox{-30}{\footnotesize {\em Zebrafish}}}
        \put (66.4,3.7) {\footnotesize Activity}
        \put (66.9,2.2) {\footnotesize Cell $1$}
        \put (71.7,7) {\footnotesize Activity}
        \put (72.2,5.5) {\footnotesize Cell $2$}
        \put (87,2.5) {Extracted Traces}
        \put (87.5,1) {\footnotesize (some mismatches)}
    \end{overpic}
    \caption{\textbf{Settings without precise sample-channel assignments:} (a) Illustration of the generic setting in which acquired samples of some multivariate quantity of interest (Object $1$ and $2$) may be shuffled (Channel $1$ and $2$) due to crossings, occlusions, or measurement noise. (b) Real-world application example in whole-brain calcium imaging. Traces of neuronal activity may be shuffled, due to wrongly annotated neurons.}
    \label{fig:general_setup}
\end{figure*}

\section{Related Work}
\label{sec:related_work}
We start by providing a short literature review of related work on the topics of unlabeled sensing and sampling and reconstruction of continuous-time sparse signals.

\paragraph*{Unlabeled Sensing}There has been a considerable amount of recent research regarding linear regression without correspondences, which can be formulated as a linear system of equations
$ \boldsymbol{y} = \boldsymbol{\Pi}\boldsymbol{A}\boldsymbol{\beta},\:$ with the permutation matrix $\bm \Pi$ and coefficients $\bm\beta$ as unknowns.
This was subsequently extended from permutations to any invertible and diagonalizable matrix \cite{dokmanic_permutations_2019}, as well as to arbitrary linear maps \cite{tsakiris_homomorphic_2019}. Further, it was shown that the maximum likelihood estimate (MLE) of $\boldsymbol{\beta}$ for a response vector that is corrupted by additive random noise tends towards $\boldsymbol{\beta}$ for an increasing signal-to-noise ratio (SNR) \cite{unnikrishnan_unlabeled_2018} and the authors in \cite{hsu_linear_2017} established lower bounds on the SNR below which the estimation error is above a threshold for any estimator. On the other hand, it was shown in  \cite{ashwin_pananjady_linear_2018} that the  ground-truth permutation matrix $\boldsymbol{\Pi}$ coincides with high probability with the associated MLE if the SNR is fixed and exceeds a threshold. However, if the SNR is too low, the MLE differs from $\boldsymbol{\Pi}$ with high probability and it could be shown for a fixed SNR that the MLE is asymptotically inconsistent \cite{abid_linear_2017}. For a sparsely shuffled response vector, it was shown that under mild hypotheses $\boldsymbol{\beta}$ coincides with the optimal solution of an $\ell_1$-regression problem which can be solved via convex optimization \cite{slawski_linear_2019}, and an improvement of the methodology by relying on hypothesis testing, expectation maximization (EM) and reweighted least squares, was made in \cite{slawski_pseudo-likelihood_2021}. Other EM approaches were proposed in \cite{abid_stochastic_2018} and improved upon in \cite{tsakiris_algebraic-geometric_2020}, where the authors developed an algebraic geometric theory for this problem by expressing it as a polynomial system, which contains $\boldsymbol{\beta}$ in its root locus. Approaches based on branch-and-bound, RANSAC and concave minimization have also been proposed in \cite{tsakiris_homomorphic_2019}, \cite{peng_linear_2020}. In \cite{peng_homomorphic_2021}, the problem was posed as a sparse error correction and solved via hard thresholding pursuit. Other approaches for solving unlabeled sensing problems are based on geometric reconstruction \cite{elhami_unlabeled_2017}, observation-specific offsets and $\ell_1$-penalization \cite{wang_estimation_2020}, or graph matching algorithms \cite{abbasi_r-local_2021}.
More recently, the problem of sample mismatches has been considered for principal component analysis \cite{yao_unlabeled_2021}, where the authors showed that it is well defined and proposed a two-stage solver based on algebraic geometry. In \cite{xu_uncoupled_2019}, the authors addressed the problem of learning models that are not necessarily linear by utilizing pairwise comparison data.

\paragraph*{Continuous-Time Sparse Signal Recovery}
In \cite{vetterli_sampling_2002}, Vetterli \textit{et al.} presented a sampling theorem for a class of sparse signals, e.g., streams of (differentiated) Diracs, piecewise polynomials or nonuniform splines, and they showed that such signals can be perfectly reconstructed from lowpass (LP) filtered samples for some families of sampling kernels, such as ideal LP filters and Gaussian kernels. Later, the class of sampling kernels allowing perfect reconstruction was extended to kernels satisfying the so-called Strang-Fix conditions \cite{dragotti_exact_2005} and the theory was adapted to also include piecewise sinusoidal signals  \cite{dragotti_sampling_2007} and handle noisy observations \cite{maravic_sampling_2005}. More recently, the conditions for perfect reconstruction for noiseless samples and sampling kernels that satisfy the Strang-Fix conditions was generalized \cite{haro_sampling_2018} and novel denoising algorithms based on structured low-rank matrix approximation were developed \cite{condat-hirabyashi-2015,haro_sampling_2018,simeoni_cpgd_2021}. Alternative methods based on convex optimization and atomic norm minimization \cite{candes-fernandez-2012,tang-etal-2013} were also proposed for the estimation of sparse signals in {\em off-the-grid} settings.

\section{Cross-Channel Unlabeled Sensing}
\label{sec:sus}
Consider a multi-channel signal $\bm x_m\in\mathbb{C}^N$, $m=1,\dots,M$, where $M$ is the number of channels and each channel signal $\bm x_m$ lies in the column space $\Es \subset \mathbb{C}^N$ of some fixed sensing matrix $\bm E\in\mathbb{C}^{N\times K}$, $N\geq K$; i.e., $\bm x_m = \bm E\bm \beta_m$ for some $\bm \beta_m\in\mathbb{C}^K$. Our interest concerns signals $\bm y_m\in\mathbb{C}^N$, where every row of the $N\times M$ matrix $\bm Y = (\bm y_1 \cdots \bm y_M)$ equals the corresponding row of the $N\times M$ matrix $\bm X =(\bm x_1 \cdots \bm x_M)$ up to a permutation of its entries, as illustrated in Fig.~\ref{fig:shuffled}. 
\begin{figure}
    \centering
    \begin{overpic}[width=0.3\textwidth,tics=5]{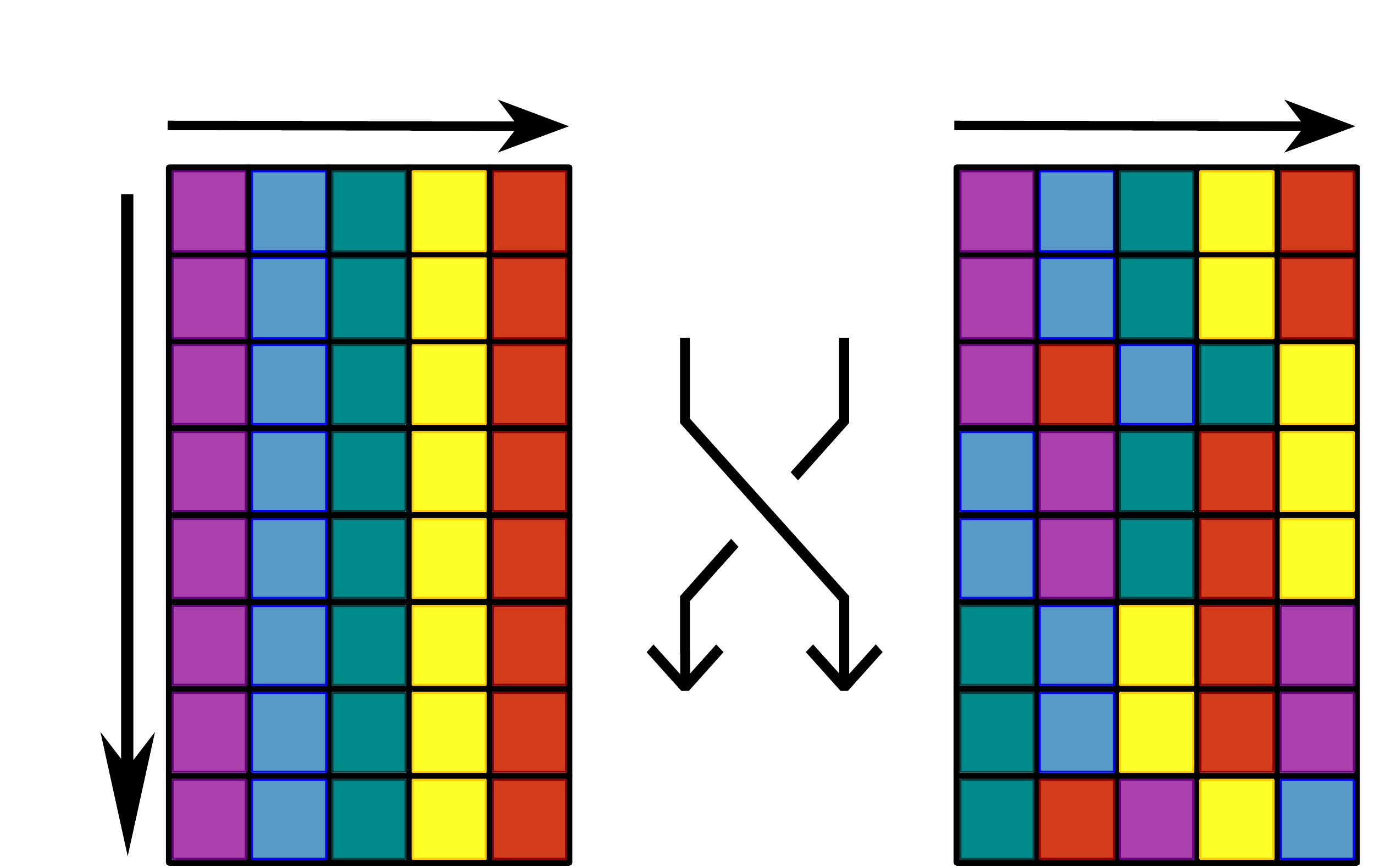}
    \put (24,57)  {$\bm x_m$}
    \put (81,57.3) {$\bm y_m$}
    \put (3,25) {$n$}
    \end{overpic}
    \caption{\textbf{Shuffled multi-channel signals:} If $\boldsymbol{x}_m$ represent the columns in a data matrix the shuffling corresponds to a permutation of each individual row.}
    \label{fig:shuffled}
\end{figure}
In other words, we are interested in multi-channel signals with imprecise correspondences between their samples and the channels. We refer to $\bm y_1, \dots, \bm y_M$ as a shuffled multi-channel signal with respect to $\bm x_1,\dots,\:\bm x_M$, and pose the question under which conditions unique recovery of all $\bm \beta_m$ is possible, when only the shuffled signals $\bm y_1,\dots,\bm y_M$ and the sensing matrix $\bm E$ are accessible, but not $\bm x_1,\dots,\:\bm x_M$. Let us first formalize our definition of shuffled signals in precise terms:
\begin{definition}[\textbf{Shuffled Multi-Channel Signal}]
\label{def:shuffled_multi_channel}
     For every pair of indices $m,n\in\{1,\dots,M\}$ we consider binary column vectors $\boldsymbol{q}_{mn}\in\{0,1\}^N$,\:  with the property 
\begin{equation}
\label{eq:sum_q}
\sum_{m=1}^M\boldsymbol{q}_{mn}~=~\sum_{n=1}^M\boldsymbol{q}_{mn}=
    \underbrace{\begin{pmatrix}
        1&1&\cdots&1
    \end{pmatrix}^\top}_{\coloneqq\boldsymbol{1}}.
\end{equation}
We call $\bm y_m\in\mathbb{C}^N, \, m=1,\dots,M$, a shuffled multi-channel signal, if it is related to the underlying multi-channel signal $\bm x_m~\in~\mathbb{C}^N, m=1,\dots,M$ by
\begin{align}
\label{eq:permutation_model_msignals}
    \underbrace{\begin{pmatrix}
        \boldsymbol{y}_{1} \\
        \boldsymbol{y}_{2} \\
        \vdots \\
        \boldsymbol{y}_{M} \\
    \end{pmatrix}}_{\eqqcolon\bm y}
    =
    \underbrace{\begin{pmatrix}
        \boldsymbol{Q}_{11} & \boldsymbol{Q}_{12}&\dots&\boldsymbol{Q}_{1M} \\
        \boldsymbol{Q}_{21} & \boldsymbol{Q}_{22}&\dots&\boldsymbol{Q}_{2M} \\
        \vdots & \vdots&\ddots&\vdots \\
        \boldsymbol{Q}_{M1} & \boldsymbol{Q}_{M2}&\dots&\boldsymbol{Q}_{MM} \\
    \end{pmatrix}}_{\eqqcolon \bm \Pi }
    \underbrace{\begin{pmatrix}
        \boldsymbol{x}_{1} \\
        \boldsymbol{x}_{2} \\
        \vdots \\
        \boldsymbol{x}_{M} \\
    \end{pmatrix}}_{\eqqcolon\bm x}, 
\end{align}
where $\boldsymbol{Q}_{mn} \coloneqq \mathrm{diag}(\boldsymbol{q}_{mn})$. 
\end{definition}
Denote by $\Qb$ the $3$-dimensional tensor whose $(m,n,p)$ entry is the $p$th entry of $\bm q_{m n}$, i.e., $\Qb_{mnp} = \bm q_{mn}(p)$. For $\bm \Pi$ as in \eqref{eq:permutation_model_msignals}, we write $\bm\Pi_{\Qb} \coloneqq \bm \Pi$ to imply its relation to $\Qb$ and $\bm\Pi_{\Qb}\coloneqq (\bm Q_{m n})$ to indicate the partition of $\bm \Pi_{\Qb}$ into $N\times N$ matrices $\bm Q_{m n}$.
As we will see in the following two lemmas, the matrices $\bm\Pi_{\Qb}$ form a subgroup of the group of permutation matrices and are hence closed under multiplication and inversion.
\begin{lemma}
\label{lem:perm_mat}
A matrix $\boldsymbol{\Pi}_\Qb$, which satisfies Definition~\ref{def:shuffled_multi_channel}, is a permutation matrix.
\end{lemma}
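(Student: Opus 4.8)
The plan is to verify directly that $\bm\Pi_{\Qb}$ is a square $\{0,1\}$-matrix possessing exactly one entry equal to $1$ in each of its rows and in each of its columns, which is precisely the defining property of a permutation matrix. First I would fix a global indexing convention adapted to the block structure: since $\bm\Pi_{\Qb}$ is partitioned into $M\times M$ blocks $\bm Q_{mn}=\diag(\bm q_{mn})$ of size $N\times N$, I index its rows and columns by pairs $(m,p)$ and $(n,p')$ with $m,n\in\{1,\dots,M\}$ and $p,p'\in\{1,\dots,N\}$, so that the entry of $\bm\Pi_{\Qb}$ in row $(m,p)$ and column $(n,p')$ equals $[\bm Q_{mn}]_{pp'}$.

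The key observation is that each block $\bm Q_{mn}$ is diagonal, whence $[\bm Q_{mn}]_{pp'}=0$ whenever $p\neq p'$ and $[\bm Q_{mn}]_{pp}=\bm q_{mn}(p)\in\{0,1\}$. Consequently the only nonzero entries of $\bm\Pi_{\Qb}$ sit at positions where the within-block indices coincide, namely at $((m,p),(n,p))$ with value $\Qb_{mnp}$. This decouples the problem coordinate by coordinate: for each fixed $p$ the nonzero pattern of $\bm\Pi_{\Qb}$ restricts to an independent $M\times M$ binary array $(\Qb_{mnp})_{m,n}$.

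Next I would translate the constraint~\eqref{eq:sum_q} into row- and column-sum statements. Summing the entries of row $(m,p)$ over all columns selects only the $M$ nonzero candidates at columns $(n,p)$, $n=1,\dots,M$, giving $\sum_{n=1}^M\bm q_{mn}(p)=\bigl(\sum_{n=1}^M\bm q_{mn}\bigr)(p)=\bm 1(p)=1$ by~\eqref{eq:sum_q}. Since these $M$ entries lie in $\{0,1\}$ and sum to $1$, exactly one of them equals $1$; hence every row of $\bm\Pi_{\Qb}$ contains precisely a single $1$. An identical argument applied to column $(n,p)$, now invoking the complementary identity $\sum_{m=1}^M\bm q_{mn}=\bm 1$ from~\eqref{eq:sum_q}, shows that each column of $\bm\Pi_{\Qb}$ contains precisely a single $1$. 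Together these two facts establish that $\bm\Pi_{\Qb}$ is a permutation matrix.

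I do not anticipate a genuine obstacle here, since the argument is a direct verification rather than a conceptual one; the only point demanding care is the bookkeeping between the block and global indexing, specifically the fact that the diagonal structure of each $\bm Q_{mn}$ forces row $(m,p)$ to interact solely with columns sharing the same coordinate $p$. Getting this correspondence right is exactly what lets the two summation constraints in~\eqref{eq:sum_q} translate cleanly into the one-per-row and one-per-column properties.
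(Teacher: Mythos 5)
Your proof is correct and follows the same approach as the paper, which simply asserts that the one-$1$-per-row-and-column property holds ``by construction''; your argument spells out exactly why that construction works, via the block-diagonal index bookkeeping and the row/column sum constraints in~\eqref{eq:sum_q}. No gaps.
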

\begin{proof}
A permutation matrix is characterized by the property of having exactly one $1$ at each row and column, while the rest entries are zero. This is true for $\bm \Pi_\Qb$ by construction.
\end{proof}
\begin{lemma}
    \label{lem:pi1pi2}
    The set of all $\bm \Pi_\Qb$ that follow Definition~\ref{def:shuffled_multi_channel} is a subgroup of the group of permutation matrices.
\end{lemma}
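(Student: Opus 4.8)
The plan is to verify the three subgroup axioms directly, exploiting the diagonal structure of the blocks $\bm Q_{mn}$. The single observation that makes everything transparent is a reformulation of Definition~\ref{def:shuffled_multi_channel}: the constraint \eqref{eq:sum_q} says precisely that, for each fixed coordinate $p\in\{1,\dots,N\}$, the $M\times M$ binary matrix $(\Qb_{mnp})_{m,n=1}^M$ has all row and column sums equal to one, i.e.\ it is an $M\times M$ permutation matrix. Thus a matrix $\bm\Pi_\Qb$ obeying the definition is nothing but a ``coordinate-wise'' permutation: at each position $p$ it permutes the $M$ channel entries by some $\sigma_p\in S_M$, and the whole family of admissible $\bm\Pi_\Qb$ is in bijection with $(S_M)^N$. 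With this in hand the group axioms reduce to the corresponding facts for permutation matrices applied slice by slice.

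First I would check that the identity lies in the set: taking $\bm q_{mn}=\delta_{mn}\bm 1$ gives $\bm Q_{mn}=\bm I_N$ when $m=n$ and the zero matrix otherwise, so $\bm\Pi_\Qb=\bm I_{MN}$, and \eqref{eq:sum_q} holds trivially. Next, for closure under multiplication I would multiply two admissible matrices block by block. Since the product of two diagonal matrices is the diagonal matrix of their entrywise (Hadamard) product, the $(m,n)$ block of $\bm\Pi_\Qb\bm\Pi_{\Qb'}$ equals $\sum_{k=1}^M \bm Q_{mk}\bm Q'_{kn}=\mathrm{diag}\!\big(\sum_{k=1}^M \bm q_{mk}\odot\bm q'_{kn}\big)$, so the product is again of the required block form, with defining vectors $\bm q''_{mn}=\sum_{k=1}^M \bm q_{mk}\odot\bm q'_{kn}$. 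Read coordinatewise, the $p$-th slice of $\Qb''$ is exactly the matrix product of the $p$-th slices of $\Qb$ and $\Qb'$, each an $M\times M$ permutation matrix by the reformulation above. As permutation matrices are closed under multiplication, every slice of $\Qb''$ is a permutation matrix; hence $\bm q''_{mn}\in\{0,1\}^N$ and the sum conditions \eqref{eq:sum_q} hold for $\Qb''$, placing the product back in the set.

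Finally, for inverses I would use that a permutation matrix is orthogonal, so $\bm\Pi_\Qb^{-1}=\bm\Pi_\Qb^\top$; its $(m,n)$ block is $\bm Q_{nm}^\top=\bm Q_{nm}$, the last equality holding because each $\bm Q_{nm}$ is diagonal and therefore symmetric. Thus the inverse is $\bm\Pi_{\tilde\Qb}$ with $\tilde{\bm q}_{mn}=\bm q_{nm}$, and the two conditions in \eqref{eq:sum_q} for $\tilde\Qb$ are exactly the two conditions for $\Qb$ with the row- and column-summations interchanged, so they hold. (Alternatively, closure under inversion is automatic, since a nonempty multiplicatively closed subset of the finite group of permutation matrices is already a subgroup.) I expect no serious obstacle here; the only point requiring care is the bookkeeping of block indices together with the fact that diagonal-times-diagonal collapses to a Hadamard product, which is precisely what keeps the $M$-fold sum binary rather than merely nonnegative.
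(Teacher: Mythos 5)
Your proof is correct, and its organizing idea is genuinely different from the paper's. The paper works block-by-block: it invokes Lemma~\ref{lem:perm_mat} to conclude that $\bm \Pi_{\Qb}\bm \Pi_{\Qb'}$ is a permutation matrix (hence binary), notes that the blocks $\bm Q''_{mn}=\sum_{\ell}\bm Q_{m\ell}\bm Q'_{\ell n}$ are diagonal, and verifies \eqref{eq:sum_q} by telescoping the double sum, $\sum_{n}\sum_{\ell}\bm Q_{m\ell}\bm Q'_{\ell n}=\sum_{\ell}\bm Q_{m\ell}\bm I=\bm I$; inversion is handled by transposition exactly as you do. You instead decompose coordinate-wise: for each $p$ the slice $(\Qb_{mnp})_{m,n}$ is an $M\times M$ permutation matrix, so the admissible $\bm \Pi_{\Qb}$ are in bijection with $(S_M)^N$, slices of a product multiply as matrices, and all the defining conditions --- binariness included --- are inherited from closure of the symmetric group under multiplication and inversion. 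Your route buys conceptual transparency (the set is a copy of the direct product $(S_M)^N$ acting coordinate-wise), gives binariness of $\bm q''_{mn}$ without appealing to Lemma~\ref{lem:perm_mat}, checks the identity element explicitly (the paper leaves nonemptiness implicit), and your closing remark --- a nonempty multiplicatively closed subset of a finite group is automatically a subgroup --- is a legitimate shortcut the paper does not use. The paper's route buys economy: it stays entirely within the block-matrix formalism used throughout the section, so the computation $\bm Q''_{mn}=\sum_{\ell}\bm Q_{m\ell}\bm Q'_{\ell n}$ reappears verbatim in the multi-channel reduction preceding Theorem~\ref{thm:SULS}, and it never needs the tensor-slice viewpoint. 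Both arguments are complete.
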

\begin{proof}
We first show closure under multiplication. Let $\bm \Pi_{\Qb}$ and $\bm \Pi_{\Qb'}$ be matrices as in Definition~\ref{def:shuffled_multi_channel}. By Lemma~\ref{lem:perm_mat} these are permutation matrices, and thus so is $\bm \Pi := \bm \Pi_{\Qb} \bm \Pi_{\Qb'}$. With $\bm \Pi_{\Qb} = (\bm Q_{m n})$ and $\bm \Pi_{\Qb'} = (\bm Q'_{m n})$ as in Definition~\ref{def:shuffled_multi_channel},  it is clear that $\bm \Pi = (\bm Q''_{m n} = \sum_{\ell=1}^M \bm Q_{m \ell} \bm Q'_{\ell n})$. Since the $\bm Q_{m \ell}$'s and $\bm Q'_{\ell n}$'s are diagonal, so are the $\bm Q''_{m n}$'s, and also 
$$\sum_{n=1}^M \bm Q''_{m n} = \sum_{n=1}^M \sum_{\ell=1}^M \bm Q_{m \ell} \bm Q'_{\ell n}  = \sum_{\ell=1}^M \bm Q_{m \ell} \left(\sum_{n=1}^M \bm Q'_{\ell n}\right)=\bm I,$$
and similarly $\sum_{m=1}^M \bm Q''_{m n}=\bm I$. Hence, $\bm \Pi = \bm \Pi_{\Qb''}$ for a suitable $\Qb''$ as in Definition~\ref{def:shuffled_multi_channel}. Finally, for closure under inversion, if $\bm \Pi_{\Qb}$ is as in Definition~\ref{def:shuffled_multi_channel}, Lemma~\ref{def:shuffled_multi_channel} gives $(\bm \Pi_{\Qb})^{-1} = \bm \Pi_{\Qb}^\top = \bm \Pi_{\Qb'}$, with $\Qb'_{m n p} = \Qb_{n m p}$.  
\end{proof}

\paragraph*{Two-Channel} To give a better intuition for the problem, we now turn our attention to the case of two channels, and return to the multi-channel case subsequently. For $M=2$, relation \eqref{eq:permutation_model_msignals} reduces to
\begin{equation}
\label{eq:block_shuffled}
    \begin{pmatrix}
        \boldsymbol{y}_{1} \\
        \boldsymbol{y}_{2} \\
    \end{pmatrix}\,=\,
       \underbrace{\begin{pmatrix}
        \boldsymbol{Q}_{11} & \boldsymbol{Q}_{12}\\
        \boldsymbol{Q}_{21} & \boldsymbol{Q}_{22}\\
    \end{pmatrix}}_{\bm\Pi_\Qb}
    \begin{pmatrix}
        \boldsymbol{x}_{1} \\
        \boldsymbol{x}_{2} \\
    \end{pmatrix}
\end{equation}
with $\boldsymbol{Q}_{11} = \boldsymbol{Q}_{22} =: \bm Q$ and $\boldsymbol{Q}_{12} = \boldsymbol{Q}_{21} = \bm I -\bm Q$. We have $\bm Q = \operatorname{diag}(\bm q)$ for some $\bm q\in\{0,1\}^N$ whose zero and nonzero elements indicate whether or not measurements have been shuffled between the two channels; for simplicity we write $\bm \Pi_\Qb = \bm \Pi_{\bm q}$. 
\noindent Since $\bm x_1, \bm x_2 \in\Es$, we can expand \eqref{eq:block_shuffled} into
\begin{equation}
\label{eq:ULS_FRI_model_beta}
     \underbrace{\begin{pmatrix}
            \boldsymbol{y}_{1} \\
            \boldsymbol{y}_{2} \\
    \end{pmatrix}}_{\coloneqq\boldsymbol{y}}
        \,=\,
   \underbrace{\begin{pmatrix}
        \boldsymbol{Q} & \boldsymbol{I}\,-\,\boldsymbol{Q}\\
        \boldsymbol{I}\,-\,\boldsymbol{Q} & \boldsymbol{Q}\\
    \end{pmatrix}}_{\coloneqq\boldsymbol{\Pi}_{\bm q}}
    \underbrace{\begin{pmatrix}
        \boldsymbol{E} & \\
         & \boldsymbol{E}\\
    \end{pmatrix}}_{\coloneqq\boldsymbol{A}}
    \underbrace{\begin{pmatrix}
        \boldsymbol{\beta_1} \\
        \boldsymbol{\beta_2} \\
    \end{pmatrix}}_{\coloneqq\boldsymbol{\beta}}.
\end{equation}
In our setting, the permutation matrix $\bm \Pi_{\bm q}$ is unknown and so are the regression coefficients $\bm \beta$; this turns \eqref{eq:ULS_FRI_model_beta} into an unlabeled sensing formulation \cite{unnikrishnan_unlabeled_2018}.
Observe, however, that the sensing matrix $\bm A$ in \eqref{eq:ULS_FRI_model_beta} is highly structured: it is block diagonal and each block contains a copy of the same matrix $\bm E$. On the other hand, existing unlabeled sensing theory that studies unique recovery guarantees has so far concentrated on the case where the sensing matrix $\bm A$ is generic \cite{unnikrishnan_unlabeled_2018},\cite{dokmanic_permutations_2019}, \cite{tsakiris_algebraic-geometric_2020,tsakiris_homomorphic_2019,peng2021homomorphic,peng_homomorphic_2021,tsakiris2023determinantal}; as such, these results are not directly applicable here. Instead, we will be deriving conditions under which structured unlabeled sensing problems of the form \eqref{eq:ULS_FRI_model_beta} admit a unique or finitely many solutions. 
For that purpose, let us look at the situation where two different solutions (i.e., different pairs of permutations and regression coefficients) could give rise to the same set of observations. 
\subsubsection*{Uniqueness of the solution}
Suppose there are two-channel signal configurations $\bm x_1' = \bm E \bm \beta_1', \, \bm x_2' = \bm E \bm \beta_2'$ and $\bm x_1'' = \bm E \bm \beta_1'', \, \bm x_2'' = \bm E \bm \beta_2''$, both of which explain the observed channel data $\bm y_1, \bm y_2$. That is, there exist two permutation matrices $\bm \Pi_{\bm q'},\bm \Pi_{\bm q''}$ structured as in \eqref{eq:block_shuffled} with binary diagonal matrices $\bm Q' = \diag{(\bm q')}, \bm Q'' = \diag{(\bm q'')}$, under which $\bm \beta_1',\bm \beta_2'$ and $\bm \beta_1'',\bm \beta_2''$ are solutions to \eqref{eq:ULS_FRI_model_beta}. This gives a relation   
\begin{align}
\bm \Pi_{\bm q'''} \begin{pmatrix}
        \boldsymbol{E} & \\
         & \boldsymbol{E}\\
    \end{pmatrix}\begin{pmatrix} \bm \beta_1'' \\ \bm \beta_2'' \end{pmatrix} = 
    \begin{pmatrix}
        \boldsymbol{E} & \\
         & \boldsymbol{E}\\
    \end{pmatrix}\begin{pmatrix} \bm \beta_1' \\ \bm \beta_2' \end{pmatrix},\label{eq:id,tau}
\end{align}  where $\bm \Pi_{\bm q'''} = \bm \Pi_{\bm q'}^\top \bm \Pi_{\bm q''}$ is structured as in \eqref{eq:block_shuffled} with $\bm Q''' =~\bm I - \bm Q''-\bm Q'+2\bm Q''\bm Q'$. It thus suffices to only consider \eqref{eq:id,tau}, and for simplicity we rename $\bm q''', \bm Q'''$ to $\bm q, \bm Q$. 

Denote by $\varrho: \Ce^N \rightarrow \Ce^N$ the linear transformation given by multiplication with $\bm Q$; this is just a coordinate projection that preserves those entries of $\bm \zeta \in \Ce^N$, which correspond to the nonzero entries of $\bm q$, and sets the rest to zero. Thus, if $\bm q$ has $r$ nonzero elements, we will say that $\varrho$ preserves $r$ coordinates.
Denoting by $\varrho^\perp$ the linear transformation given by $\bm I-\bm Q$, relation \eqref{eq:id,tau} is equivalent to 
\begin{align} 
\bm x_1' & = \varrho(\bm x_1'') + \varrho^\perp(\bm x_2'') \label{eq:v_1} \\
\bm x_2' & = \varrho^\perp(\bm x_1'') + \varrho(\bm x_2''). \label{eq:v_2} 
\end{align} 
Of course, for any $\bm \zeta \in \Ce^N$, we have that $\bm \zeta = \varrho(\bm \zeta) + \varrho^\perp(\bm \zeta)$, so that the above relations are equivalent to 
\begin{align*}
    \varrho(\bm x_1'-\bm x_1'') &= \varrho^\perp(\bm x_2'' - \bm x_1')\\\varrho^\perp(\bm x_2'-\bm x_1'') &= \varrho(\bm x_2''-\bm x_2').
\end{align*}
Since $\im(\varrho) \cap \im(\varrho^\perp) = 0$ and $\bm x_1', \bm x_2', \bm x_1'', \bm x_2''$ are all in the column space $\Es$ of $\bm E$, these relations directly imply that
\begin{align}
\bm x_1' - \bm x_1'' & \in \ker(\varrho) \cap \Es, \label{eq:v_1-v_1'} \\
\bm x_1' - \bm x_2'' & \in \ker(\varrho^\perp) \cap \Es, \label{eq:v_1-v_2'} \\
\bm x_2'-\bm x_1'' & \in \ker(\varrho^\perp) \cap \Es, \label{eq:v_2-v_1'} \\
\bm x_2' - \bm x_2'' & \in \ker(\varrho) \cap \Es, \label{eq:v_2-v_2'}
\end{align} where $\ker(\varrho)$ is the kernel of $\varrho$, the set of all $\bm \zeta$'s such that $\varrho(\bm \zeta)=0$. Next, let us define what we will henceforth refer to as generic subspaces and generic matrices, respectively.

\begin{definition} [\bf{Generic Subspace, Generic Matrix}]
\label{dfn:generic}
Let $\Es \subset \Ce^N$ be a linear subspace of dimension $K$. We call $\Es$ generic if $\dim \vartheta(\Es) =K$ for every coordinate projection $\vartheta: \Ce^N \rightarrow \Ce^N$ that preserves at least $K$ entries.
We call a tall $N \times K$ matrix $\bm E$ generic, if its column space is generic, i.e., if every $K \times K$ submatrix of $\bm E$ has rank $K$. 
\end{definition}

With the above, we can show the following result:

\begin{prp} \label{prp:generic}
For a generic matrix $\bm E\in\mathbb{C}^{N\times K}$ and some binary diagonal matrix $\bm Q$ with $r$ nonzero entries, the relation 
\begin{align*}
\begin{pmatrix*}
    \bm Q & \bm I-\bm Q\\
    \bm I-\bm Q & \bm Q
\end{pmatrix*}
\begin{pmatrix}
        \boldsymbol{E} & \\
         & \boldsymbol{E}\\
    \end{pmatrix}\begin{pmatrix} \bm \beta_1'' \\ \bm \beta_2'' \end{pmatrix} = \begin{pmatrix}
        \boldsymbol{E} & \\
         & \boldsymbol{E}\\
    \end{pmatrix}\begin{pmatrix} \bm \beta_1' \\ \bm \beta_2' \end{pmatrix}
\end{align*} implies $\bm\beta_1' = \bm\beta_1''$ and $\bm\beta_2' = \bm\beta_2''$, or $\bm\beta_1' = \bm\beta_2''$ and $\bm\beta_2' = \bm\beta_1''$, if and only if $K \le \max\{r, N - r \}$. 
\end{prp}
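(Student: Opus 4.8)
The plan is to build directly on the reduction already carried out before the statement, which shows that the displayed relation is equivalent to the four membership conditions \eqref{eq:v_1-v_1'}--\eqref{eq:v_2-v_2'}. Everything then hinges on a single intersection lemma: for a generic subspace $\Es$ of dimension $K$ and a coordinate projection $\vartheta$ preserving $s$ entries, one has $\ker(\vartheta)\cap\Es = 0$ if and only if $s \ge K$. The forward direction follows from $\dim\vartheta(\Es)\le s$ together with rank--nullity, since $s<K$ forces the restriction $\vartheta|_\Es$ to have nontrivial kernel; the reverse direction is exactly the genericity hypothesis, because $s\ge K$ gives $\dim\vartheta(\Es)=K$, i.e. $\vartheta|_\Es$ injective. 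Applying this to $\varrho$ (which preserves $r$ entries) and to $\varrho^\perp$ (which preserves $N-r$ entries) yields $\ker(\varrho)\cap\Es = 0 \Leftrightarrow r\ge K$ and $\ker(\varrho^\perp)\cap\Es = 0 \Leftrightarrow N-r\ge K$.

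For the ``if'' direction I would assume $K\le\max\{r,N-r\}$, so at least one of $r\ge K$, $N-r\ge K$ holds. If $r\ge K$, then $\ker(\varrho)\cap\Es=0$, and \eqref{eq:v_1-v_1'} and \eqref{eq:v_2-v_2'} force $\bm x_1'=\bm x_1''$ and $\bm x_2'=\bm x_2''$; since a generic $\bm E$ has a full-rank $K\times K$ submatrix and is therefore injective, this is the first alternative $\bm\beta_1'=\bm\beta_1''$, $\bm\beta_2'=\bm\beta_2''$. If instead $N-r\ge K$, then $\ker(\varrho^\perp)\cap\Es=0$, and \eqref{eq:v_1-v_2'} together with \eqref{eq:v_2-v_1'} force $\bm x_1'=\bm x_2''$ and $\bm x_2'=\bm x_1''$, i.e. the second alternative.

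For the ``only if'' direction I would argue by contraposition: assuming $K>\max\{r,N-r\}$ (so $r<K$ and $N-r<K$), I would exhibit a solution violating both alternatives. By the lemma both $V_0:=\ker(\varrho)\cap\Es$ and $V_1:=\ker(\varrho^\perp)\cap\Es$ are nonzero, and they meet only at $0$ since their supports are disjoint. Picking nonzero $\bm u\in V_0$ and $\bm w\in V_1$, I would set $\bm x_1''$ to be any element of $\Es$, $\bm x_2''=\bm x_1''+\bm u+\bm w$, $\bm x_1'=\bm x_1''+\bm u$ and $\bm x_2'=\bm x_1''+\bm w$. A direct check using $\varrho(\bm u)=0$, $\varrho^\perp(\bm u)=\bm u$, $\varrho^\perp(\bm w)=0$ and $\varrho(\bm w)=\bm w$ confirms that \eqref{eq:v_1} and \eqref{eq:v_2} hold, so these $\bm x$'s (equivalently their unique preimages under $\bm E$) satisfy the displayed relation; yet $\bm x_1'\neq\bm x_1''$ (as $\bm u\neq 0$) rules out the first alternative and $\bm x_1'\neq\bm x_2''$ (as $\bm w\neq 0$) rules out the second.

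The main obstacle is getting this counterexample right: the perturbations must be chosen so that the swapped signals $\bm x_1',\bm x_2'$ land back inside $\Es$ while genuinely differing from both $(\bm x_1'',\bm x_2'')$ and $(\bm x_2'',\bm x_1'')$, and this is exactly what the direct-sum structure $V_0\oplus V_1$ — available only when both summands are nonzero, i.e. when the condition fails — makes possible. The forward implication, by contrast, is a routine consequence of the intersection lemma and the injectivity of $\bm E$.
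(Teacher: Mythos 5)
Your proof is correct and takes essentially the same approach as the paper's: the forward direction uses genericity to show that $\Es$ meets the relevant kernel trivially (the paper phrases this via the full column rank of the row submatrix $\bm E_\Ic$, you via an equivalent intersection lemma obtained by rank--nullity), and your converse builds the same kind of counterexample from nonzero elements of $\ker(\varrho)\cap\Es$ and $\ker(\varrho^\perp)\cap\Es$. In fact the paper's construction ($\bm x_1'=\bm 0$, $\bm x_2'=\bm x_1''+\bm x_2''$) is exactly the member of your family of counterexamples obtained by choosing the base point so that $\bm x_1'$ vanishes.
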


\begin{proof}
Suppose $K \le \max\{r, N - r \}$ and \eqref{eq:id,tau} holds. Then, \eqref{eq:v_1-v_1'}-\eqref{eq:v_2-v_2'} hold. 

Suppose first that $K \le r$. Let $\mathscr{N}=\{0,\dots,N-1\}$ and let $\Ic=\{i_1 < i_2 < \cdots < i_r\}\subset\mathscr{N}$ be the coordinates that are preserved by $\varrho$ and denote by $\bm E_\Ic$ the row submatrix of $\bm E$ obtained by keeping only those rows indexed by $\Ic$. Similarly, for any vector $\bm \zeta \in \mathbb{C}^N$, denote by ${\bm \zeta}_\Ic$ the vector of elements of $\bm\zeta$ corresponding to coordinates in $\Ic$.

Since $\bm x_1' - \bm x_1'' \in \Es$, we have that $\bm x_1'-\bm x_1''  = \bm E \bm \beta$, for some $\bm \beta \in \Ce^K$. Since $\bm x_1' - \bm x_1'' \in \ker(\varrho)$, we have that $\bm 0=\varrho(\bm x_1'- \bm x_1'') $, which implies $\bm 0 = (\bm x_1'- \bm x_1'')_\Ic = \bm E_\Ic \bm\beta$. Thus, $\bm\beta$ is a right nullvector of $\bm E_\Ic$. But $\bm E_\Ic$ has full column rank $K$, because $\bm E$ is generic (Definition \ref{dfn:generic}) and $K \le r$ by hypothesis. Hence, $\bm\beta=\bm 0$ and $\bm x_1' = \bm x_1''$. Similarly, $\bm x_2'-\bm x_2'' \in \Es \cap \ker(\varrho)$ implies that $\bm x_2' = \bm x_2''$.

If instead $K \le N-r$, we let $\Ic^c = \mathscr{N} \setminus \Ic$ be the complement of $\Ic$ in $\mathscr{N}$. As before, since $\bm x_1' - \bm x_2'' \in \Es$, we have $\bm x_1' - \bm x_2'' = \bm E\bm\beta$, for some $\bm\beta \in \Ce^K$. But also $\bm x_1' - \bm x_2'' \in \ker(\varrho^\perp)$, that is $\bm 0 = \varrho^\perp(\bm x_1' - \bm x_2'')$, which implies  $\bm 0 = (\bm x_1' - \bm x_2'')_{\Ic^c} = \bm E_{\Ic^c} \bm\beta$. Since $\bm E$ is generic and $K \le N-r$, we have that $\bm\beta$ must be zero, which gives $\bm x_1' = \bm x_2''$. Arguing similarly for $\bm x_2' - \bm x_1'' \in \Es \cap \ker(\varrho^\perp)$, gives $\bm x_2' = \bm x_1''$. 

Conversely, suppose that $K > \max\{r, N - r \}$. Then, $\Es$ intersects both $\ker(\varrho)$ and $\ker(\varrho^\perp)$. Let $\bm 0 \neq \bm x_1'' \in \ker(\varrho) \cap \Es$ and $\bm 0 \neq \bm x_2'' \in \ker(\varrho^\perp) \cap \Es$. Set $\bm x_1' = \bm 0$. Then, $\bm x_1' = \varrho(\bm x_1'') + \varrho^\perp(\bm x_2'')$. Set $\bm x_2' = \bm x_1'' + \bm x_2'' \in \Es$. Since $\ker(\varrho) = \im(\varrho^\perp)$ and $\ker(\varrho^\perp) = \im(\varrho)$, we have that $\bm x_1'' \in \im(\varrho^\perp)$ and $\bm x_2'' \in \im(\varrho)$. Hence, $\varrho^\perp(\bm x_1'') = \bm x_1''$ and $\varrho(\bm x_2'') = \bm x_2''$. Thus, $\bm x_2' = \varrho^\perp(\bm x_1'')+ \varrho(\bm x_2'')$. We conclude that a set of relations of the form \eqref{eq:v_1}-\eqref{eq:v_2} hold, but $\bm 0=\bm x_1' \neq \bm x_1''$ and $\bm 0=\bm x_1' \neq \bm x_2''$. 
\end{proof}

\paragraph*{Multi-Channel} Let us now turn our attention back to the case of $M$ channels, as in Definition~\ref{def:shuffled_multi_channel}. 
Similar to the two-channel case, we are now interested in relations of the type
\begin{align}
\label{eq:observed_data}
\begin{pmatrix}
    \bm y_1\\
    \bm y_2\\
    \vdots\\
    \bm y_M\\    
\end{pmatrix}=\bm \Pi_{\Qb}
\underbrace{\begin{pmatrix}
    \bm E & &&\\
    &\bm E&&\\
    &&\ddots&\\
    &&&\bm E
\end{pmatrix}}_{\displaystyle=:\bm E^{(M)}}
\underbrace{\begin{pmatrix}
    \bm \beta_1\\
    \bm \beta_2\\
    \vdots\\
    \bm \beta_M\\
\end{pmatrix}}_{\displaystyle=:\bm \beta },
\end{align}  that link the shuffled $M$-channel signal $\bm y_1,\dots, \bm y_M$ and sensing matrix $\bm E$\textemdash{which are known\textemdash}with the unknown regression vectors $\bm \beta_1, \dots, \bm \beta_M$.

Consider $M$-channel signal configurations $\bm x_1' = \bm E \bm \beta_1',\dots,\bm x_M' = \bm E \bm \beta_M'$ and $\bm x_1'' = \bm E \bm \beta_1'',\dots,\bm x_M'' = \bm E \bm \beta_M''$, that both explain the observed channel data $\bm y_1, \dots,\bm y_M$. That is, there exist two structured permutation matrices $\bm \Pi_{\Qb'} = (\bm Q_{mn}'),\, \bm \Pi_{\Qb''}=(\bm Q_{mn}'')$ as in \eqref{eq:permutation_model_msignals}, such that 
\begin{align*}
\bm \Pi_{\Qb''} \bm E^{(M)} \bm \beta'' = \bm \Pi_{\Qb'} \bm E^{(M)} \bm \beta'.
\end{align*}
\noindent In view of Lemma~\ref{lem:pi1pi2}, it is enough to only consider 
\begin{align*}
\bm \Pi_{\Qb'''} \bm E^{(M)} \bm \beta'' = \bm E^{(M)} \bm \beta',
\end{align*}
\noindent where $\bm \Pi_{q'''}=\bm\Pi_{\Qb'}^\top\bm \Pi_{\Qb''} = (\bm Q'''_{mn}= \sum_{\ell=1}^M \boldsymbol{Q}'_{\ell m}\boldsymbol{Q}''_{\ell n})$. As with Proposition~\ref{prp:generic} for two channels, we now obtain the main result of this section, which identifies a sufficient condition for unique recovery up to a renaming of the channels:
\begin{theorem}[\bf{Cross-Channel Unlabeled Sensing}]
\label{thm:SULS}
Consider $M$-channel signal configurations $\bm x_m'=\bm E\bm \beta_m',\:\bm\beta_m'\in\mathbb{C}^K$ and $\:\bm x_m''=\bm E\bm \beta_m'',\:\bm\beta_m''\in\mathbb{C}^K,\:m=1,\dots,M$, where $\bm E\in\mathbb{C}^{N\times K}$ is generic, and binary diagonal matrices $\{\bm Q_{m n}\}_{m,n=1}^M$, satisfying $\bm I~=~\sum_{m=1}^M\bm Q_{mn}=~\sum_{n=1}^M\bm Q_{mn}$. 
Suppose that $N \ge M K$ and $\bm \beta_\kappa' \neq \bm \beta_\lambda'$ for every $\kappa \neq \lambda$.
Then, the relation
\begin{align}
\begin{pmatrix}
        \boldsymbol{Q}_{11} & \dots&\boldsymbol{Q}_{1M} \\
        \vdots & \ddots&\vdots \\
        \boldsymbol{Q}_{M1} &\dots&\boldsymbol{Q}_{MM} \\
    \end{pmatrix}
    \begin{pmatrix}
    \bm x_1''\\
    \vdots\\
    \bm x_M''\\
\end{pmatrix} = 
\begin{pmatrix}
    \bm x_1'\\
    \vdots\\
    \bm x_M'\\
\end{pmatrix} \label{eq:M-channel-US}
\end{align}
implies $\bm x_m' = \bm x_{n_m}''$ for every $m$, with $\{n_m\}_{m=1}^M = \{m\}_{m=1}^M$. 
\end{theorem}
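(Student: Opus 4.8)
The plan is to read off the block structure of \eqref{eq:M-channel-US} coordinate by coordinate, turn it into a collection of local index-set agreements between the primed and double-primed channels, and then promote these local agreements to global equalities using genericity together with the dimension bound $N \ge MK$. First I would expand the $m$-th block row of \eqref{eq:M-channel-US} as $\bm x_m' = \sum_{n=1}^M \bm Q_{mn}\bm x_n''$, which entrywise reads $\bm x_m'(p) = \sum_{n=1}^M \bm q_{mn}(p)\,\bm x_n''(p)$. The row-sum constraint $\sum_{n=1}^M \bm q_{mn} = \bm 1$ says that for each fixed $m$ and each coordinate $p$ exactly one $n$ has $\bm q_{mn}(p)=1$; hence, writing $\Ic_{mn}$ for the support of $\bm q_{mn}$, the sets $\{\Ic_{mn}\}_{n=1}^M$ partition the coordinate index set $\mathscr N$, and on $\Ic_{mn}$ we have the entrywise agreement $\bm x_m'(p)=\bm x_n''(p)$, i.e.\ $\bm E_{\Ic_{mn}}(\bm\beta_m' - \bm\beta_n'') = \bm 0$.

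Next I would apply a pigeonhole argument channel by channel. For fixed $m$ the partition gives $\sum_{n=1}^M |\Ic_{mn}| = N \ge MK$, so at least one block, say $\Ic_{m,n_m}$, satisfies $|\Ic_{m,n_m}| \ge K$. On that block $\bm E_{\Ic_{m,n_m}}$ is a $K\times K$ (or taller) row submatrix of the generic matrix $\bm E$ and therefore has full column rank $K$ by Definition~\ref{dfn:generic}; since $\bm E_{\Ic_{m,n_m}}(\bm\beta_m'-\bm\beta_{n_m}'')=\bm 0$, this forces $\bm\beta_m' = \bm\beta_{n_m}''$, and hence $\bm x_m' = \bm E\bm\beta_m' = \bm E\bm\beta_{n_m}'' = \bm x_{n_m}''$. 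This is the direct multi-channel analogue of the $K\le r$ branch of the proof of Proposition~\ref{prp:generic}, with the single count $N \ge MK$ doing the work played there by $K \le \max\{r, N-r\}$.

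Finally I would verify that $m\mapsto n_m$ is a genuine permutation. Injectivity is where the hypothesis $\bm\beta_\kappa' \ne \bm\beta_\lambda'$ enters: if $n_{m_1}=n_{m_2}=:n$ for some $m_1\ne m_2$, then $\bm\beta_{m_1}' = \bm\beta_n'' = \bm\beta_{m_2}'$, contradicting distinctness of the primed coefficients. An injective self-map of the finite set $\{1,\dots,M\}$ is a bijection, so $\{n_m\}_{m=1}^M = \{m\}_{m=1}^M$ and the claimed identification $\bm x_m' = \bm x_{n_m}''$ holds for every $m$. I expect the only real obstacle to be the assembly step: the agreements on the $\Ic_{mn}$ only pin down pairings on subsets of coordinates, and it is the combination of the pigeonhole bound (guaranteeing that each channel shares at least $K$ coordinates with some single $\bm x_n''$, which genericity then upgrades to a global match) and the distinctness hypothesis (preventing two channels from matching the same target) that forces these fragmentary pairings to glue into one globally consistent permutation.
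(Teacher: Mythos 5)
Your proposal is correct and follows essentially the same route as the paper's own proof: the paper phrases the block-row decomposition via coordinate projections $\rho_{mn}$ forming an orthogonal resolution of the identity, whereas you phrase it via the supports $\Ic_{mn}$ partitioning the index set, but the substance—pigeonhole on $N \ge MK$ to find a block with at least $K$ preserved coordinates, genericity of the corresponding row submatrix to force $\bm\beta_m' = \bm\beta_{n_m}''$, and the distinctness hypothesis to make $m \mapsto n_m$ injective hence a permutation—is identical. Your version is, if anything, slightly more self-contained, since you spell out the full-column-rank argument that the paper delegates to the proof of Proposition~\ref{prp:generic}.
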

\begin{proof}
Let $\rho_{mn}: \mathbb{C}^N \rightarrow \mathbb{C}^N$ be the coordinate projection given by multiplication with $\bm Q_{mn}$, and denote by $r_{mn}$ the number of nonzero elements in $\bm Q_{mn}$. Then, we can write the $m$th row of the matrix equation \eqref{eq:M-channel-US} as 
\begin{align}
\bm x_m' = \rho_{m1}(\bm x_1'') + \cdots + \rho_{mM}(\bm x_M'').  \label{eq:M-channel-US-ith-row}
\end{align} The $\rho_{mn}$'s with $m$ fixed form an \emph{orthogonal resolution of the identity} \cite{Roman}, i.e.: i) $\rho_{m1} + \cdots + \rho_{mM}$ is the identity map $\mathbb{C}^N \rightarrow \mathbb{C}^N$, ii) $\rho_{mn} \rho_{mn'} = \rho_{mn'} \rho_{mn}=0$ for $n \neq n'$, and iii) each $\rho_{mn}$ is an orthogonal projection. Hence, \eqref{eq:M-channel-US-ith-row} yields
\begin{align}
\sum_{n=1}^M \rho_{mn}( \bm x_m' - \bm x_n'' )=0. \label{eq:M-channel-US-ith-row-decomposed}
\end{align} Since $N \ge M K$, it cannot be that $r_{mn} < K$ for every $m$. Thus, there exists some $n_m \in\{1,\dots,M\}$, such that $r_{m n_m} \ge K$. Applying $\rho_{m n_m}$ to both sides of \eqref{eq:M-channel-US-ith-row-decomposed}, we get 
$$ \rho_{m n_m}(\bm x_m' - \bm x_{n_m}'')=0.$$ In turn, this gives us that $\bm x_m' - \bm x_{n_m}'' \in \mathscr{E} \cap \ker (\rho_{m n_m})$, where $\mathscr{E}$ is the column space of $\bm E$, and as in the proof of Proposition~\ref{prp:generic}, we obtain $\bm x_m' = \bm x_{n_m}''$. There was nothing special about our choice of $m$, and so for every $m = 1,\dots, M$ there is an $n_m$ such that $\bm x_m' = \bm x_{n_m}''$. If $n_{\kappa} = n_{\lambda}$ for some $\kappa \neq \lambda$, then $$\bm x_\kappa' =  \bm x_{n_\kappa}'' = \bm x_{n_\lambda}''  = \bm x_{\lambda}',$$ whence $\bm E (\bm \beta_\kappa - \bm \beta_\lambda) = \bm 0$. Since $\bm E$ is generic, it has full column rank, by which we arrive at the contradiction $\bm \beta_\kappa = \bm \beta_\lambda$. We conclude that all $n_m$'s are distinct.
\end{proof}

\section{Recovery of Shuffled Sparse Signals}
\label{sec:sss}
Theorem~\ref{thm:SULS} establishes conditions on the dimensions and number of shuffled samples that allow for unique recovery in the cross-channel unlabeled sensing problem, assuming the underlying signals lie in a generic subspace. In this section, we extend the results of Theorem~\ref{thm:SULS} and link them to the continuous-time sparse signal reconstruction problem. This leads to a cross-channel unlabeled sensing problem where the sensing matrix is not known precisely. We show that, assuming the signals of interest admit a sparse representation in an overcomplete dictionary, it is possible to decouple the sensing matrix recovery and the unlabeled sensing problem.
\subsection{Sparse Signal Model}\label{ssec:signal}
We consider continuous-time sparse signals that can be characterized by a weighted sum of spikes supported on the unit interval. Such signals may be, for instance, the result of temporal point processes, which are widely used stochastic models that approximate, e.g., the firing activity of neurons \cite{Eden2016} or wireless network signals \cite{keeler2018wireless}. Formally, we write:
\begin{equation}\label{eq:stream_diracs}
    x(t) = \sum_{k} a_k\,\delta(t-t_k),
\end{equation}
where $\delta(\cdot)$ is the Dirac delta function\footnote{$\int_{-\infty}^{\infty} f(t)\delta(t-t_0)\,dt = f(t_0),\quad \int_{-\infty}^{\infty}\delta(t)\,dt = 1$}, and $t_k\in[0,1)$ and $a_k\in\mathbb{C}$ are the respective locations and weights of the spikes. 
Note that $x(t)$ in \eqref{eq:stream_diracs} has {\em Fourier Series} (FS) coefficients 
\begin{equation}\label{eq:fourier_series}
    X_{\ell} = \int_0^1 x(t)\,e^{j2\pi t\ell}\, dt= \sum_ka_k\,e^{-j2\pi t_k\ell },\quad \ell\in\mathbb{Z}.
\end{equation} 
In a sampling setup as the one depicted in Fig.~\ref{fig:shuffling}, and given an appropriate choice of the sampling kernel $\varphi(\cdot)$, it is possible to obtain samples $x_n = \langle x(t),\varphi(t-n/T_s)\rangle$ whose Discrete Fourier Transform (DFT) corresponds to the true FS coefficients $X_\ell$ of the signal \cite{vetterli_sampling_2002,dragotti_sampling_2007,bejar_finite_2020}. Therefore, in our multi-channel sampling setup we consider signals $\bm x_m$ given as mixtures of complex exponentials in the frequency domain:
\begin{align}
\label{eq:Xl}
    &X_{m\ell} \coloneqq [\operatorname{DFT}_N(\bm x_{m})]_\ell = \sum_{k=0}^{K_m-1} a_{mk}\,e^{-j2\pi t_{mk}\ell},
\end{align}
where $[\operatorname{DFT}_N(\cdot)]_\ell$ denotes the $\ell$th element of the $N$-point DFT of a discrete vector signal, $K_m \in\mathbb{N}_+$ is the number of components, and where $\mathcal{T}_m \coloneqq\{t_{mk}\in [0,1)\}_{k=0}^{K_m-1}$ and $\mathcal{A}_m\coloneqq \{a_{mk}\in\mathbb{C}\}_{k=0}^{K_m-1}$ are the locations and weights of the spikes for the $m$th channel.
\subsection{Shuffled Sparse Signal Recovery}\label{ssec:shuffled_sparse}
\begin{figure*}[t]
    \centering
    \begin{overpic}[width=0.99\textwidth,tics=5]{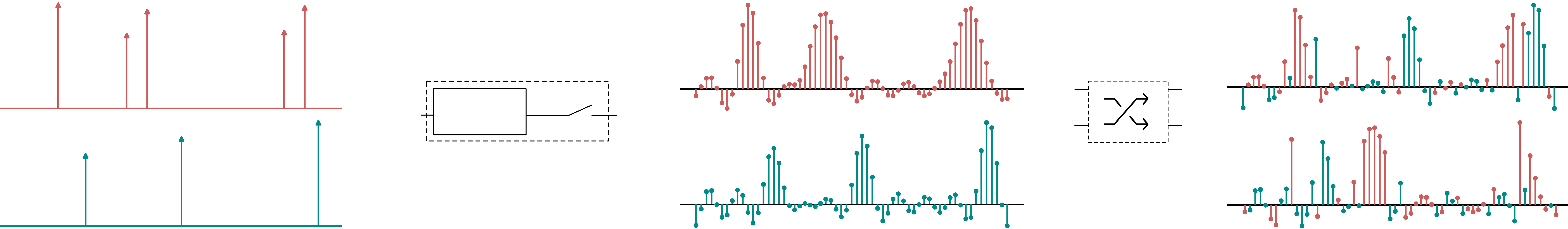}
        \put (65.8,9.75) {$x_{1n}$}
        \put (65.8,7.5) {$x_{2n}$}
        \put (74.8,9.75) {$y_{1n}$}
        \put (74.8,7.5) {$y_{2n}$}
        \put (39.5,6.75) {$x_{mn}$}
        \put (22,6.75) {$x_{m}(t)$}
        \put (27.9,7) {{\small$\varphi^*(\frac{-t}{T_s})$}}
        \put (32.75,9.75) {$\Phi^*$}
        \put (34.75,8) {\footnotesize$nT_s$}
    \end{overpic}
    \caption{\textbf{Sampling Setup - Shuffled Sparse Signals:} The measurements $\boldsymbol{y}_m$ of the $m$th channel are modeled as a shuffling of the samples in the underlying signals $\boldsymbol{x}_m$ at each sampling point $n$.}
    \label{fig:shuffling}
\end{figure*}
Consider now that we observe a shuffled multi-channel signal $\bm y_1,\dots,\bm y_M$ as in Definition~\ref{def:shuffled_multi_channel}, where the unknown $\bm x_m$'s are of the form of \eqref{eq:Xl}. Contrary to the previous section, the individual channels $\bm x_m = \bm E_m\bm a_m,\: m =1,\dots,M$ with $\:\bm a_m\in\Ce^{K_m}$ live in unknown, possibly different subspaces $\Es_m\subset\Ce^{N}$ spanned by the columns of unknown sensing matrices $\bm E_m\in\Ce^{N\times K_m}$.

If the $\bm x_m$'s were known, and provided the number of measurements $N$ is sufficiently large, it is possible to retrieve the parameters of the channels $\Theta_m = \{(t_{mk},a_{mk})\}_{k=0}^{K_m-1}$. In particular, it is well-known that for a uniformly sampled mixture of $K$ distinct complex exponentials as in \eqref{eq:Xl}, a number of $N=2K$ samples uniquely determines the parameters of the mixture, which can be recovered by using line spectral estimation methods, such as Prony's method \cite{prony-1795}\cite{stoica-moses-2005}. The sensing matrix $\bm E_m$ with column space $\Es_m$ is then expressed as $\bm E_m \coloneqq \bm{W V}_m$, where $\bm W$ is the $N \times N$ inverse discrete Fourier transform (IDFT) matrix and $\bm V_m$ is an $N\times K_m$ Vandermonde matrix (i.e., $[\bm V_m]_{\ell k} = v_{mk}^\ell$ $v_{mk}=\exp(-j2\pi t_{mk})$). Then, each channel can be written as $\bm x_m = \bm E_m \bm a_m$ with $\: \bm a_m = (a_{m,0},\dots,a_{m,K_m-1})^\top$.

In our case, however, we cannot directly access the FS coefficients of the channels due to the shuffling of the temporal signals $\bm x_m$. Instead, the special structure of \eqref{eq:permutation_model_msignals} allows us to look at the addition of the measured vectors since this quantity is invariant with respect to the permutation of the samples across channels, i.e.,
\begin{align}
\label{eq:add_vectors}
\boldsymbol{x}_\Sigma\,\coloneqq\,\sum_{m=1}^M\bm{x}_{m}\,&=\,\sum_{m=1}^M\boldsymbol{y}_{m}\,\eqqcolon \boldsymbol{y}_\Sigma.
\end{align}
Based on \eqref{eq:add_vectors}, $[\operatorname{DFT}_N(\bm x_{\Sigma})]_\ell$ is again a mixture of complex exponentials as defined in \eqref{eq:Xl}. Let us now denote $\Theta_\Sigma=\{(a_{\Sigma,k} \in \mathbb{C},\:t_{\Sigma,k} \in [0,1)\,)\}_{k=0}^{K_\Sigma-1}$ the parameters of the mixture $\bm x_\Sigma$, where $K_\Sigma$ is the number of distinct locations $t_{\Sigma,k}$ in $\bm x_\Sigma$ with corresponding weights $a_{\Sigma,k}$. Suppose that $\{t_{\Sigma,k}\}_{k=0}^{K_\Sigma-1}=\bigcup_{m=1}^M\mathcal{T}_m$ with nonvanishing $a_{\Sigma, k}$'s. Note that this condition only fails to hold, if multiple channels have intersecting sets of locations $\mathcal{T}_m$, and the $a_{mk}$'s associated with the same location annihilate each other. From a probabilistic point of view, however, such an event is extremely unlikely if the $t_{mk}$ were randomly sampled from the unit interval. If the number of samples now satisfies $N\geq 2K_\Sigma$, $\Theta_\Sigma$ can be uniquely recovered via line spectral estimation, and from the $t_{\Sigma,k}$'s we arrive at a sensing matrix of the form
\begin{equation}
\label{eq:U}
    \bm E \coloneqq~\bm {W} \bm V_\Sigma,
\end{equation} 
where $\bm V_\Sigma$ is an $N\times K_\Sigma$ Vandermonde matrix with $[\bm V_\Sigma]_{\ell k} = v_{\Sigma,k}^\ell$, $\{v_{\Sigma,k}=\exp(-j2\pi t_{\Sigma,k})\}_{k=0}^{K_\Sigma-1}=\bigcup_{m=1}^M \{v_{mk}\}_{k=1}^{K_m-1}$. The column space of $\bm E\in\Ce^{N\times K_\Sigma}$ is then given by $\Es = \Es_1 + \dots + \Es_M \subset \Ce^N$, and hence $\bm x_m\in\Es$ for all $m$.
The described procedure allows us to retrieve the support of the spikes which, in turn, allows us to represent each underlying channel signal as $\bm x_m = \bm E \bm\beta_m$ for some $\bm\beta_m\in\Ce^{K_\Sigma}$. Now, the problem of recovering the signals from the mismatched sample-channel assignments reduces to the cross-channel unlabeled sensing problem in \eqref{eq:ULS_FRI_model_beta}.\newline\indent
To show that unique recovery is possible by using the results developed in the previous section, it remains to be shown that the matrix $\bm E = \bm W \bm V_\Sigma$ is generic in the sense of Definition~\ref{dfn:generic}, under good circumstances. This is done in the next lemma:

\begin{lemma}
\label{lem:Cauchy}
Let $\bm W$ be the $N \times N$ IDFT matrix and $\bm V$ an $N \times K$ Vandermonde matrix with $[\bm V]_{\ell k}=v_k^{\ell}$, such that the $v_k\in\Ce$ are distinct and $ v_k^N \neq \pm  1$ for all $k$. Then, the matrix $\bm W \bm V$ is generic in the sense of Definition~\ref{dfn:generic}. 
\end{lemma}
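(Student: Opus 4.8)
The plan is to verify the matrix criterion in Definition~\ref{dfn:generic} directly: I would show that every $K\times K$ submatrix of $\bm E=\bm W\bm V$ is invertible, i.e.\ has nonzero determinant. The engine of the argument is the observation (hinted at by the lemma's name) that $\bm E$ is, up to a diagonal column scaling, a \emph{Cauchy matrix}, so each minor can be evaluated in closed form by the classical Cauchy determinant formula and shown to be nonzero under the stated hypotheses.

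First I would compute the entries of $\bm E$ explicitly. Writing $[\bm W]_{n\ell}=\tfrac1N\omega^{n\ell}$ with $\omega=e^{j2\pi/N}$ and $[\bm V]_{\ell k}=v_k^\ell$, the $(n,k)$ entry is a finite geometric sum,
\begin{equation*}
[\bm E]_{nk}=\frac1N\sum_{\ell=0}^{N-1}(\omega^n v_k)^\ell=\frac1N\,\frac{(\omega^n v_k)^N-1}{\omega^n v_k-1}=\frac{v_k^N-1}{N v_k}\cdot\frac{1}{\omega^n-v_k^{-1}},
\end{equation*}
where I used $\omega^{nN}=1$ and factored $v_k$ out of the denominator. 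Thus $\bm E=\bm C\,\bm D$, where $\bm C$ is the Cauchy matrix with row nodes $x_n:=\omega^n$ and column nodes $y_k:=v_k^{-1}$, i.e.\ $[\bm C]_{nk}=1/(x_n-y_k)$, and $\bm D=\diag\!\left(\frac{v_k^N-1}{N v_k}\right)$.

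Then I would fix an arbitrary selection of rows $n_1<\cdots<n_K$ and expand the determinant of the corresponding submatrix as $\left(\prod_{k=1}^K\frac{v_k^N-1}{N v_k}\right)\det[\,1/(\omega^{n_i}-v_k^{-1})\,]_{i,k}$. By the Cauchy determinant formula the second factor is a ratio of products of the differences $x_{n_i}-x_{n_{i'}}$, $y_k-y_{k'}$ and $x_{n_i}-y_k$, hence nonzero precisely when (i) the row nodes $\omega^{n_i}$ are pairwise distinct, (ii) the column nodes $v_k^{-1}$ are pairwise distinct, and (iii) no row node equals a column node. Condition (i) holds because the $n_i$ are distinct residues modulo $N$ and $\omega$ is a primitive $N$th root of unity; (ii) is exactly the hypothesis that the $v_k$ are distinct (and nonzero); and (iii) asks $\omega^{n_i}v_k\neq1$, which would force $v_k^N=1$, excluded by $v_k^N\neq\pm1$. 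The same hypothesis makes the scalings $\frac{v_k^N-1}{Nv_k}$ nonzero, so their product does not vanish. Hence every $K\times K$ minor is nonzero and $\bm E$ is generic.

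The step I expect to require the most care is the root-of-unity bookkeeping that pins down why \emph{both} signs occur in $v_k^N\neq\pm1$. In the closed form above the quantity that must avoid $1$ is $\omega^{nN}v_k^N$: for the plain IDFT $\omega^{nN}=1$ and only $v_k^N\neq1$ is invoked, whereas if the sampling normalization places the row nodes on the $2N$th roots of unity (so $\omega^{nN}=(-1)^n$), then both the numerator $\omega^{nN}v_k^N-1$ and the collision condition $\omega^n v_k=1$ read $v_k^N\neq\pm1$ across the two row parities. I would therefore keep $\omega^{nN}\in\{1,-1\}$ symbolic until the end rather than setting it to $1$ prematurely, and verify it against the precise definition of $\bm W$ in the sampling setup; this is also where I would confirm the column nodes $v_k^{-1}$ are well defined (in the intended application $|v_k|=1$, so $v_k\neq0$). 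Should the parity dependence spoil the clean $\bm C\bm D$ factorization, I would reduce to the $\omega^{nN}=1$ case by regrouping the rows of a fixed parity before applying the Cauchy formula.
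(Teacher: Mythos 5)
Your proposal is correct and takes essentially the same route as the paper: both proofs evaluate each $K\times K$ row-submatrix of $\bm W\bm V$ via the finite geometric sum, recognize a Cauchy matrix flanked by diagonal scalings, and conclude nonvanishing of the determinant from the Cauchy determinant formula, using only distinctness of the $v_k$ and $v_k^N\neq 1$ (the paper's proof, like yours, never actually invokes the exclusion of $-1$). The only substantive difference is that your factorization places the column nodes at $v_k^{-1}$, which silently assumes $v_k\neq 0$ (a case the stated hypotheses do not exclude, and which you correctly flagged), whereas the paper's equivalent arrangement $[\bm W_\Lc\bm V]_{ik}=(1-v_k^N)\cdot\frac{1}{\omega^{-\ell_i}-v_k}\cdot\frac{1}{\omega^{\ell_i}}$ keeps the nodes at $(\omega^{-\ell_i},v_k)$ and needs no inversion, closing that edge case by a trivial rearrangement.
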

\begin{proof}
Without loss of generality, let $\bm W$ be the scaled $N \times N$ IDFT matrix with $[\bm W]_{i\ell}= \omega^{i\ell}$ and $\omega = \exp(j2\pi/N)$. Let $\bm V$ be the $N \times K$ Vandermonde matrix with $[\bm V]_{\ell k} = v_{k}^{\ell}$ (note that the indices $\ell, k$ and $i$ start at $0$).\newline\indent
Let $\Lc = \{\ell_0 < \cdots < \ell_{K-1} \}$ be any subset of $\mathscr{N}=\{1,\dots,N\}$ of cardinality $|\Lc|=K$. We will show that the $K \times K$ submatrix $[\bm W \bm V]_\Lc$ obtained by keeping rows with indices in ${\Lc}$ is invertible. That submatrix is clearly equal to $\bm W_\Lc \bm V$, where $\bm W_\Lc$ is the row submatrix of $\bm W$ obtained by selecting rows indexed by ${\Lc}$. 
The key observation is that the matrix $\bm W_\Lc \bm V$ can be factorized as $\bm W_\Lc \bm V =~\bm D_1 \bm C \bm D_2$, where $\bm D_1, \bm D_2 $ are diagonal and $\bm C$ is a Cauchy matrix:
 \begin{align*}
        [\bm W_\Lc \bm V]_{ik} &= \sum_{\ell=0}^{N-1}v_k^\ell \omega^{\ell_i\ell} =  \frac{1-v_k^N\omega^{N\ell_i}}{1-v_k\omega^{\ell_i}}\\&
        =\underbrace{(1-v_k^N)}_{=[\bm D_1]_{kk}}\underbrace{\frac{1}{\omega^{-\ell_i}-v_k}}_{=[\bm C]_{ik}}\underbrace{\frac{1}{\omega^{\ell_i}}.}_{=[\bm D_2]_{ii}}
    \end{align*}
Now, the determinant of a Cauchy matrix has a well-known closed form expression; from this we  find that 
\begin{align*}
\det(\bm W_\Lc\bm V)
=\medmath{\frac{\displaystyle\prod_{k=0}^{K-1}(1-v_k^N) \prod_{0\leq k' < k \leq K-1}\!\!\!\!\!\!\!\!\!(\omega^{-\ell_k}-\omega^{-\ell_{k'}})(v_{k'}-v_{k})}{\displaystyle\prod_{i = 0}^{K-1}\omega^{\ell_i}\prod_{i, k = 0}^{K-1}(\omega^{-\ell_i}-v_{k})}},
\end{align*}
which is never zero since the $v_{k}$'s are distinct and $v_{k}^N~\neq~\pm~1$.
\end{proof}

We are now ready to present our main result for shuffled multi-channel sparse signal recovery:
 \begin{theorem}
\label{theorem:1}
Consider $M$ distinct signals $\{\bm x_m \in \Ce^N\}_{m=1}^M$ with $[\operatorname{DFT}_N(\bm x_{m})]_\ell=\sum_{k=0}^{K_m-1} a_{mk}v^{\ell t_{mk}},$ $v = \exp(-j2\pi)$, where $\mathcal{T}_m=\{t_{mk}\in[0,1)\}_{k=0}^{K_m-1}$ is the set of (distinct) locations with associated weights $\mathcal{A}_m=\{a_{mk}\in\Ce\}_{k=0}^{K_m-1}$, and $v^{Nt_{mk}}\ne\pm1$ for all $k$. For $\bm x = \sum_{m=1}^M\bm x_m$, suppose that $[\operatorname{DFT}_N(\bm x)]_\ell=\sum_{k=0}^{K_m-1} a_{k}v^{\ell t_{k}}$ with locations $\{t_k\}_{k=1}^K=\bigcup_{m=1}^M\mathcal{T}_m$ and nonvanishing weights $a_k$. Further, assume that we observe the channels $\{\bm y_m\}_{m=1}^M$ of a shuffled M-channel signal w.r.t. the $\bm x_m$'s. If the number of samples $N\geq MK$, then $\{\mathcal{A}_m,\mathcal{T}_m\}_{m=1}^M$ can be uniquely recovered up to a renaming of the channels.
\end{theorem}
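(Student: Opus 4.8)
The plan is to assemble the machinery already developed in the paper: reduce the shuffled sparse recovery problem to a cross-channel unlabeled sensing instance with a \emph{known, generic} sensing matrix, and then invoke Theorem~\ref{thm:SULS}. First I would exploit the structural fact recorded in \eqref{eq:add_vectors}: since $\sum_{m}\bm q_{mn}=\bm 1$, the sum $\bm y_\Sigma=\sum_{m=1}^M\bm y_m$ is invariant under the cross-channel shuffling, so $\bm y_\Sigma=\bm x_\Sigma=\sum_{m=1}^M\bm x_m$ is directly accessible from the observations. Its $N$-point DFT is again a mixture of complex exponentials, $[\operatorname{DFT}_N(\bm x_\Sigma)]_\ell=\sum_{k=1}^K a_k v^{\ell t_k}$, whose distinct nodes are exactly $\{t_k\}=\bigcup_m\mathcal{T}_m$ with nonvanishing weights, by hypothesis. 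Since $N\ge MK$ and $M\ge 2$, we have $N\ge 2K$, so line spectral estimation (e.g., Prony's method) recovers the support $\{t_k\}_{k=1}^K$ uniquely.

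Given the support, I would form the $N\times K$ Vandermonde matrix $\bm V_\Sigma$ with nodes $v_k=\exp(-j2\pi t_k)$ and set $\bm E=\bm W\bm V_\Sigma$ as in \eqref{eq:U}. The hypothesis $v^{Nt_{mk}}\neq\pm 1$ translates into $v_k^N\neq\pm 1$ for every $k$ (each $t_k$ coincides with some $t_{mk'}$), and the $v_k$ are distinct; hence Lemma~\ref{lem:Cauchy} applies and $\bm E$ is generic in the sense of Definition~\ref{dfn:generic}. Every channel then lies in $\Es=\operatorname{colspace}(\bm E)$, so I can write $\bm x_m=\bm E\bm\beta_m$ for a unique $\bm\beta_m\in\Ce^K$; comparing DFTs shows that the $k$th entry of $\bm\beta_m$ equals the weight $a_{mk'}$ when $t_k\in\mathcal{T}_m$ and is zero otherwise, so $(\mathcal{T}_m,\mathcal{A}_m)$ is read off directly from the support and nonzero values of $\bm\beta_m$.

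With $\bm E$ generic and known, the recovery problem is precisely the cross-channel unlabeled sensing instance \eqref{eq:M-channel-US}. To apply Theorem~\ref{thm:SULS} I must check its hypotheses: $N\ge MK$ holds by assumption, and the $\bm\beta_m$ are pairwise distinct because the $\bm x_m$ are distinct and $\bm E$ has full column rank (so $\bm E\bm\beta_\kappa=\bm E\bm\beta_\lambda$ forces $\bm\beta_\kappa=\bm\beta_\lambda$). Theorem~\ref{thm:SULS} then guarantees that any two configurations explaining the data agree up to a permutation of the channels, i.e., $\bm x_m'=\bm x_{n_m}''$ for some bijection $m\mapsto n_m$; equivalently the $\{\bm\beta_m\}$, and hence $\{(\mathcal{T}_m,\mathcal{A}_m)\}_{m=1}^M$, are determined up to a renaming of the channels.

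The proof is therefore a chaining argument, and the individual steps are short; the one place that demands care is confirming that the estimated support equals exactly $\bigcup_m\mathcal{T}_m$. This is where the standing assumption of nonvanishing summed weights $a_k$ is essential---without it, coincident spike locations in different channels could cancel in $\bm x_\Sigma$ and a location would be missed, so that $\bm E$ would fail to span all the $\bm x_m$ and the subsequent unlabeled sensing step would no longer be well posed. Verifying genericity through Lemma~\ref{lem:Cauchy} and matching the distinctness and dimension hypotheses of Theorem~\ref{thm:SULS} are the remaining technical points, but they follow directly from the stated assumptions.
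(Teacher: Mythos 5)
Your proposal is correct and follows essentially the same route as the paper's own proof: sum the observed channels to obtain the shuffling-invariant $\bm x_\Sigma$, recover the joint support $\bigcup_m\mathcal{T}_m$ by line spectral estimation (valid since $N\ge MK\ge 2K$), build $\bm E=\bm W\bm V_\Sigma$ and certify genericity via Lemma~\ref{lem:Cauchy}, then invoke Theorem~\ref{thm:SULS}. The only (welcome) additions beyond the paper's argument are your explicit derivation of the distinctness of the $\bm\beta_m$ from injectivity of $\bm E$ and your remark on reading $(\mathcal{T}_m,\mathcal{A}_m)$ off the support and nonzero entries of $\bm\beta_m$, both of which the paper treats as immediate.
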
 
\begin{proof}
From Theorem~\ref{thm:SULS}, we know that $\{\bm x_m=\bm E\bm\beta_m\}_{m=1}^M$ can be uniquely recovered from $\{\bm y_m\}_{m=1}^M$ up to a renaming of channels, if i) $N\geq MK$, ii) all $\bm x_m$'s are distinct, and $\bm E\in \Ce^{N\times K}$ is iii) known and iv) generic. i) and ii) hold by hypothesis. It thus remains to verify iii) and iv). 

\noindent With $X_{m\ell}$ as in $\eqref{eq:Xl}$, then the DFT of $\bm x = \sum_{m=1}^M \bm y_m$ is a mixture of complex exponentials. By assumption, the set of locations of such a mixture is given by $\mathcal{T}=\bigcup_{m=1}^M\mathcal{T}_m$, with the resulting amplitudes associated with each location being nonvanishing. With $K = |\mathcal{T}|$, and since $N\geq MK\geq2K$ with $t_{mk}$ distinct for all $k$ and fixed $m$, we can recover the set of locations $\mathcal{T}$ via line spectral estimation methods. Then, (iii) is satisfied by constructing a matrix $\bm E \eqqcolon \bm W\bm V$, where $\bm W$ is the $N\times N$ IDFT matrix and $\bm V$ is an $N\times K$ Vandermonde matrix with $[\bm V]_{\ell k}=v_k^\ell,\:v_{k}=\exp(-j2\pi t_{k}),$ with locations $\{t_k\}_{k=0}^{K-1}\coloneqq\mathcal{T}$. Condition iv) follows directly from Lemma~\ref{lem:Cauchy}, since $\bm V$ is Vandermonde with all $v_{k}'s$ distinct and $v_{k}^N\neq\pm1$. 
\end{proof}

\section{Proposed Two-Step Estimation Approach}
\label{ssec:rob_est}
In this section, we devise an empirical method for reconstructing shuffled sparse signals in the two-channel case. We propose a two-step estimation approach based on the results from Section~\ref{sec:sss}. Step~1 is dedicated solely to the estimation of the signals' support on the continuum from their addition, while Step~2 builds upon this estimate and attempts to solve the resulting unlabeled sensing problem. However, in a practical setting the observations are corrupted with measurement noise. Hence, the signals are modeled as the sum of a deterministic component admitting a sparse representation according to (\ref{eq:Xl}) and a random perturbation term that is assumed to be normally distributed, i.e.,  
\begin{equation}
\label{eq:noisy_unmixed}
    \Tilde{\boldsymbol{x}}_m\,=\,\boldsymbol{x}_{m}\,+\,\boldsymbol{\nu}_m,\:\boldsymbol{\nu}_m\,\sim\,\mathcal{N}(0,\sigma_m^2),\:m=1,2.
\end{equation}
\begin{remark}
Note that Theorem~\ref{theorem:1} guarantees a unique recovery of the parameters of ${\boldsymbol{x}}_m$ only if $\sigma_m=0$. Additionally, the sensing matrix has to be constructed from the estimated locations of the noisy signals $\Tilde{\boldsymbol{x}}_1,\Tilde{\boldsymbol{x}}_2$. The estimation errors from Step~$1$ are reflected in the estimated sensing matrix $\hat{\boldsymbol{A}}$ as so-called leverage points, while shuffled samples that deviate largely from each other, induce outliers in the residuals with respect to the assumed Gaussian distribution.
\end{remark}
\paragraph*{{{Step\:1 - Estimating Support}}} First, the two shuffled measurement vectors $\Tilde{\boldsymbol{y}}_1$ and $\Tilde{\boldsymbol{y}}_2$ with $\tilde{\bm y} \coloneqq (\tilde{\bm y_1}^\top,\tilde{\bm y_2}^\top)^\top$, $\tilde{\bm x} \coloneqq (\tilde{\bm x_1}^\top,\tilde{\bm x_2}^\top)^\top$ and $\tilde{\bm y} \coloneqq \bm\Pi_{\bm q}\tilde{\bm x}$ as in \eqref{eq:block_shuffled} are added, i.e., $\Tilde{\boldsymbol{y}}_{\Sigma}\,=\,\Tilde{\boldsymbol{x}}_{\Sigma}\,=\,\Tilde{\boldsymbol{x}}_1+\Tilde{\boldsymbol{x}}_2=\Tilde{\boldsymbol{y}}_1+\Tilde{\boldsymbol{y}}_2$, where $\Tilde{\boldsymbol{x}}_1$ and $\Tilde{\boldsymbol{x}}_2$ are the noisy unshuffled measurement vectors as defined in (\ref{eq:noisy_unmixed}). Thus, the sum is given by  $\Tilde{\boldsymbol{y}}_{\Sigma}\,=\,\boldsymbol{y}_{\Sigma}\,+\,\boldsymbol{\nu}_{\Sigma}$ with $\boldsymbol{\nu}_\Sigma\,\sim\,\mathcal{N}(0,\sigma_1^2\,+\,\sigma_2^2)$. 
To estimate the parameters of the signal mixture one could resort to the methods described in Section~\ref{sec:related_work}. However, many of these methods rely on the low-rank property of the Toeplitz matrix constructed from the FS coefficients of the signal of interest, which is not necessarily present in the case of noisy samples. In our case, we propose to use methods that impose a low-rank constraint on the Toeplitz matrix formed from the sequence of DFT coefficients $[\operatorname{DFT}_N(\tilde{\bm x}_{m})]_\ell$ \cite{cadzow_signal_1988}\cite{condat-hirabyashi-2015},\cite{haro_sampling_2018},\cite{simeoni_cpgd_2021}. These methods can be seen as a denoising step to get a better estimate of the true $X_{\Sigma,\ell}$ prior to the application of a line spectral estimation method (e.g., Prony's method \cite{prony-1795}) to recover the signal parameters. Here, we choose the method in \cite{haro_sampling_2018} which uses ADMM to solve the denoising problem with low-rank constraints due to its superior performance and faster convergence as compared to \cite{condat-hirabyashi-2015,simeoni_cpgd_2021}. Once we get an estimate of the signal parameters $\hat\Theta_\Sigma$ we can compute the matrix $\hat{\bm E} = \bm W\hat{\bm V}_\Sigma$ where $\hat{\bm V}_\Sigma$ is an estimate of the true $\bm V_\Sigma$.
\paragraph*{{{Step~2 - Shuffled Regression}}}
\label{ssec:step2}
While there are approaches to solve the general unlabeled sensing problem that assume a known sensing matrix \cite{tsakiris_algebraic-geometric_2020}, \cite{peng_algebraically-initialized_2019}, \cite{unnikrishnan_unlabeled_2018}, \cite{slawski_linear_2019}, \cite{abid_stochastic_2018}, \cite{slawski_pseudo-likelihood_2021}, \cite{peng_homomorphic_2021}, to the best of our knowledge, there exists no method that assumes contaminated regressors, while also taking into account the structure of the permutation and the sensing matrix. Assuming Gaussian noise, for a contamination-free regression matrix (i.e., $\hat{\bm A} = \bm A$), the maximum likelihood estimate (MLE) of the cross-channel unlabeled sensing problem is the solution to
\begin{align}
{(\hat{\boldsymbol\Pi}_{\bm q,\mathrm{ML}},\hat{\boldsymbol{\beta}}_{\mathrm{ML}}) =}\ & \underset{\boldsymbol{\Pi}_{\bm q},\boldsymbol{\beta}}{\mathrm{arg\, min}}\:\:
     ||\Tilde{\boldsymbol{y}}\,-\,
    \boldsymbol{\Pi}_{\bm q}
    \boldsymbol{A}
    \boldsymbol{\beta}
    ||_2^2\\
\mathrm{s.t.}&\:\:
{\bm q}\,\in\, \{0,\:1\}^N,\:\boldsymbol{\beta}\,\in\,\mathbb{R}^{2K_\Sigma}.\nonumber
\end{align}
However, we do not have access to $\bm A$ but rather to an estimate $\hat{\bm A}$ of it, which violates the assumptions that are required for the MLE to be optimal. Still, if $\hat{\bm A}$ is sufficiently close to $\bm A$, the MLE is still a reasonable choice to solve the following problem instead:
\begin{align}
\label{eq:obj_bilin_binary}
 \underset{\boldsymbol{\Pi}_{\bm q},\boldsymbol{\beta}}{\mathrm{min}}\:\:
     ||\Tilde{\boldsymbol{y}}\,-\,
    \boldsymbol{\Pi}_{\bm q}
    \hat{\boldsymbol{A}}
    \boldsymbol{\beta}
    ||_2^2
\quad\mathrm{s.t.}\:\:
{\bm q}\,\in\, \{0,\:1\}^N,\:\boldsymbol{\beta}\,\in\,\mathbb{R}^{2K_\Sigma}.
\end{align}
The combinatorial nature of problem \eqref{eq:obj_bilin_binary} makes it difficult to solve, and a natural approach is to relax $\boldsymbol{q}$ to lie in the convex set $[0,\:1]^N$.
Unfortunately, in any kind of alternate minimization approach, where the regression problem is solved for a previously estimated permutation matrix, the response vector may be contaminated by arbitrary outliers due to the shuffling. It is a well-known fact that these errors lead to a severe drop in performance of the MLE \cite{p_j_huber_robust_2009}, \cite{f_r_hampel_robust_2011}, \cite{ yohai_high_1987}, \cite{zoubir_robust_2018}. Robust estimators have been developed as a means to deal with such erroneous data. For the linear regression problem, $\tilde{\boldsymbol{y}}\,=\,{\bm H}{\boldsymbol{\beta}}+ \boldsymbol{\varepsilon}$ in particular, the goal is to estimate $\boldsymbol{\beta}$ even when ${\bm H}$ contains outliers and $\bm\varepsilon$ is non-Gaussian and possibly heavy-tailed. Given estimates $\hat{\bm\Pi}_{\bm q}$ and $\hat{\bm A}$, we can define ${\bm H}:=\hat{\bm\Pi}_{\bm q}\hat{\bm A}$ and solve for $\bm\beta$ using a robust estimator. In this work, the robust MM-estimator \cite{yohai_high_1987}, which uses the iteratively reweighted least-squares (IRWLS) algorithm for computation, is adopted due to its high breakdown point (maximum proportion of outlier-contamination beyond which its asymptotic bias is infinite) and high asymptotic relative efficiency (ratio of the robust estimators asymptotic variance compared to that of the Gaussian MLE for normally distributed noise) for normally distributed residuals\cite{zoubir_robust_2018}. 
\begin{algorithm}
    \caption{Shuffled Sparse Signal Recovery (SSSR)}
    \label{algo:RTSE}
    \hspace*{\algorithmicindent}\textbf{Input:}
    $\Tilde{\boldsymbol{y}}_1,\Tilde{\boldsymbol{y}}_2\in\mathbb{R}^N,\:K_\Sigma>2,\:\mathrm{maxiter}>0$
    \begin{algorithmic}[1]
        \State  Initialize: $\hat{\boldsymbol{q}}\,=\,(1,\,\dots,\,1)^\top,\:\hat{\boldsymbol{\Pi}}_q$ according to (\ref{eq:block_shuffled}), $\Tilde{\boldsymbol{y}}\,=\,(\Tilde{\boldsymbol{y}}_1^\top,\,\Tilde{\boldsymbol{y}}_2^\top)^\top, \:\hat{\boldsymbol{x}}\,=\,\hat{\boldsymbol{\Pi}}_{\bm q}^\top\Tilde{\boldsymbol{y}}$
        \State  $\hat{X}_{\Sigma,\ell} \gets$ DFT and denoise $ (\Tilde{\boldsymbol{y}}_1\,+\,\Tilde{\boldsymbol{y}}_2)$ \cite{haro_sampling_2018} 
        \State  $\{\hat{t}_k\}_{k=0}^{K_\Sigma-1}\gets$ Line spectral estimation of $\hat{X}_{\Sigma,\ell}$ \cite{prony-1795}
        \State  $\hat{\boldsymbol{A}} \gets \mathrm{diag}(\hat{\boldsymbol{E}}), \:\hat{\boldsymbol{E}}$ from $\{\hat{t}_k\}_{k=0}^{K_\Sigma-1}$ according to
        \Statex $\boldsymbol{\hat{E}}=\boldsymbol{W}\boldsymbol{\hat{V}}_\Sigma$
        \State \textbf{while} $\mathrm{iter}\,\leq\,\mathrm{maxiter}$ \textbf{do}:
        \State\hspace{\algorithmicindent}   $\hat{\boldsymbol{\beta}}_{\mathrm{rob}} \gets$ Robust estimate of linear
        \Statex\hspace{\algorithmicindent} regression $\hat{\boldsymbol{x}} =\,{\hat{\boldsymbol{A}}}{\boldsymbol{\beta}} + \boldsymbol{\varepsilon}$ \cite{yohai_high_1987}
        \State\hspace{\algorithmicindent}  $\hat{\boldsymbol{q}}\gets\mathrm{proj}_{\{0,1\}}\left(\underset{\boldsymbol{q}\,\in\,[0,\,1]^N}{\mathrm{argmin}}\:\:\|\hat{\boldsymbol{\Pi}}_{\bm q}^\top\Tilde{\boldsymbol{y}}\,-\,\hat{\boldsymbol{A}}\hat{\boldsymbol{\beta}}_{\mathrm{rob}}\|_2^2\right)$
        \State\hspace{\algorithmicindent}  $\hat{\boldsymbol{x}}\gets\hat{\boldsymbol{\Pi}}_{\bm q}^\top\Tilde{\boldsymbol{y}}$ 
        \State\hspace{\algorithmicindent} $\mathrm{iter} \gets \mathrm{iter}\,+\,1$
        \State Select $\hat{\boldsymbol{\beta}},\,\hat{\boldsymbol{q}}$ with smallest $\mathrm{MSE}\,=\,\|\hat{\boldsymbol{\Pi}}_{\bm q}^\top\Tilde{\boldsymbol{y}}\,-\,\hat{\boldsymbol{A}}\hat{\boldsymbol{\beta}}_{\mathrm{rob}}\|_2^2$
    \end{algorithmic}
    \hspace*{\algorithmicindent} \textbf{Output:} $\hat{\boldsymbol{A}},\:\hat{\boldsymbol{\beta}},\:\hat{\boldsymbol{q}}$
\end{algorithm}
After initializing $\hat{\boldsymbol{q}}\,=\,(1,\,\dots,\,1)$ and $\hat{\bm x} = \hat{\boldsymbol{\Pi}}_{\bm q}^\top\tilde{\boldsymbol{y}}$, where $\hat{\boldsymbol{\Pi}}_{\bm q}$ is constructed from $\hat{\boldsymbol{q}}$ following the definition in (\ref{eq:block_shuffled}), the robust estimate $\hat{\boldsymbol{\beta}}_{\mathrm{rob}}$ is computed w.r.t. $\hat{\bm x}$. Then, the convex program
\begin{align}
\label{eq:conv_program}
 \hat{\boldsymbol{q}}_{[0,1]}\,=\,&\underset{\boldsymbol{q}}{\mathrm{argmin}}\:\:\|\tilde{\boldsymbol{y}}\,-{\boldsymbol{\Pi}}_{\bm q}\,\hat{\boldsymbol{A}}\hat{\boldsymbol{\beta}}_{\mathrm{rob}}\|_2^2,\,\\&\mathrm{\:s.t.\:} {\bm q}\,\in\,[0,1]^N,\nonumber
\end{align}
is evaluated and the estimate $\hat{\boldsymbol{q}}_{[0,1]}$ is projected onto the binary set $\{0,\,1\}^N$, i.e., $\hat{\boldsymbol{q}}\,=\,\mathrm{proj}_{\{0,\,1\}}(\hat{\boldsymbol{q}}_{[0,\,1]})$, which serves as a proxy for the true assignment vector $\boldsymbol{q}$. Again, from $\hat{\boldsymbol{q}}$ the matrix $\hat{\boldsymbol{\Pi}}_{\bm q}$ is constructed and the measurements are permuted according to $\hat{\bm x}=\hat{\boldsymbol{\Pi}}_{\bm q}^\top\boldsymbol{y}$.
Unfortunately, iterating this approach does not guarantee convergence, due to the nonconvexity of the MM-estimator as well as the binary projection. However, since the MM-estimator has a high breakdown point, making a few mistakes in the sample assignment has a small impact on the estimated model. Therefore, the process is repeated until some predefined number of maximum iterations is reached. Then from all iterations, the assignment vector $\hat{\boldsymbol{q}}$ and the regression coefficients $\hat{\boldsymbol{\beta}}_{\mathrm{rob}}$ with smallest $\mathrm{MSE\:}\,=\,\|\hat{\boldsymbol{\Pi}}_{\bm q}^\top\boldsymbol{y}\,-\,\hat{\boldsymbol{A}}\hat{\boldsymbol{\beta}}_{\mathrm{rob}}\|_2^2$ are selected. Note that in (\ref{eq:conv_program}), the residual sum of squares is minimized instead of a robust loss, since $\hat{\bm x}=\hat{\boldsymbol{A}}\hat{\boldsymbol{\beta}}_{\mathrm{rob}}$ is already a robust estimate of the true signals and for the optimal permutation, the residuals are assumed to be normally distributed, i.e., free of outliers.

\section{Experimental Evaluation}
\label{sec:numerical_experiments}
In this section, the Shuffled Sparse Signal Recovery (SSSR) method presented in Section~\ref{ssec:rob_est} is evaluated in numerical experiments, where the goal is to reconstruct two-channel signals from noisy shuffled samples. To this end, we focus on i) reconstructing signals with DFT as in \eqref{eq:Xl}, i.e., time-domain signals that are given as low-pass filtered periodic streams of Diracs, as well as ii) an application in neuroscience, where the signal model is that of a periodic stream of decaying exponential functions. In the following, we describe the general setup of the conducted numerical experiments.
\subsection{Setup of Numerical Experiments}
\label{sec:metrics}
\paragraph*{{Experiment Pipeline}}
The signal generation, generally speaking, follows the same basic procedure. For some initialized numbers of Diracs $K_m,\: m=1,2$, associated locations $\{t_{m,k}\}_{k=0}^{K_m-1}$ and weights $\{a_{m,k}\}_{k=0}^{K_m-1}$, the two signals $\bm x_{1}$ and $\bm x_{2}$ are computed as the IDFT of their respective sequence of FS coefficients.

The locations are then generated as follows: 
First, the $K_\Sigma$ locations $\{t_{\Sigma,k}\}_{k=0}^{K_\Sigma-1}$ of the sum $\bm x_\Sigma$ are sampled from a uniform distribution on the interval $[0,1)$ with a minimum separation of $\Delta t = 0.02$ between locations to avoid arbitrarily ill-conditioned problems\footnote{Since, in theory, $\Delta t$ may be infinitesimal. For an in-depth analysis regarding bounds on $\Delta t$ that still guarantee stable recovery, see \cite{candes-fernandez-2012}.}. Subsequently, we randomly assign $K_m$ locations to $\bm x_m$, where we assume disjoint support of the signals (i.e., $K_\Sigma = K_1 + K_2$). The associated weights are then sampled from a uniform distribution on the interval $[0.5, 1]$. 

After the noiseless and unshuffled signals have been generated, zero-mean white Gaussian noise with variance $\sigma_m^2$
\begin{equation}
    \sigma_m^2\,=\,10^{\frac{\mathrm{SNR}\,-\,P_m}{10}},\:\:P_m\,=\,10\,\mathrm{log}_{10}\left(\frac{1}{N}\sum_{\ell=0}^{N-1} x_{m,\ell}^2\right),
\end{equation}
is added, where $P_m$ is the average signal power. The noisy signals $\Tilde{\boldsymbol{x}}_m$ are then shuffled resulting in the shuffled noisy signals $\Tilde{\boldsymbol{y}}_1\,=\,\bm Q\Tilde{\boldsymbol{x}}_1+(\bm I -\bm Q)\Tilde{\boldsymbol{x}}_2$, $\Tilde{\boldsymbol{y}}_2\,=\,\bm Q\Tilde{\boldsymbol{x}}_2+(\bm I - \bm Q)\Tilde{\boldsymbol{x}}_1,$ where $\bm Q=\diag(\bm q)$ is a random binary diagonal matrix.
\paragraph*{Performance Metrics}To evaluate the performance of the proposed method, we consider two different aspects: the signal reconstruction error, and the sample assignment. For the signal reconstruction, the normalized mean squared error ($\mathrm{nMSE}$) is adopted in order to compare signals with different strength/norm. To assess the performance of the sample assignment, we use the binary accuracy weighted by the absolute deviations of the two original signals $\boldsymbol{x}_{1},\:\boldsymbol{x}_{2}$. For real traces, the deviations are computed with respect to the unshuffled but noisy signals. Let $\mathcal{L}$ denote the set of indices of correctly assigned samples; the weighted accuracy ($\wa$) and the $\nmse$ are then given by
\begin{equation*}
    \label{eq:weighted_acc}
    \mathrm{WA}\,=\,\frac{\sum_{\ell'\in\mathcal{L}}|x_{1\ell'}\,-\,x_{2\ell'}|}{\sum_{\ell=0}^{N-1}|x_{1\ell}\,-\,x_{2\ell}|},\:\:\: \:\:\: \:\:\: 
    \mathrm{nMSE}=\frac{\|\boldsymbol{x}-\hat{\boldsymbol{A}}\hat{\boldsymbol{\beta}}\|_2^2}{\|\boldsymbol{x}\|_2^2}.
\end{equation*}
Since the ordering of channels is ambiguous, we have to compute the metrics with respect to both orderings and then select the better result. More concretely, the $\nmse$ has to be computed with respect to the vectors $(\boldsymbol{x}_{1}^\top,\:\boldsymbol{x}_{2}^\top)^\top$ and $(\boldsymbol{x}_{2}^\top,\:\boldsymbol{x}_{1}^\top)^\top$, respectively, while the correct assignments in the $\wa$ have to be determined with respect to $\boldsymbol{q}$ and $(\boldsymbol{1}\,-\,\boldsymbol{q})$. Then, the smaller $\nmse$ and the larger $\wa$ are selected.

\begin{figure*}
    \begin{overpic}[width=\textwidth,tics=5]{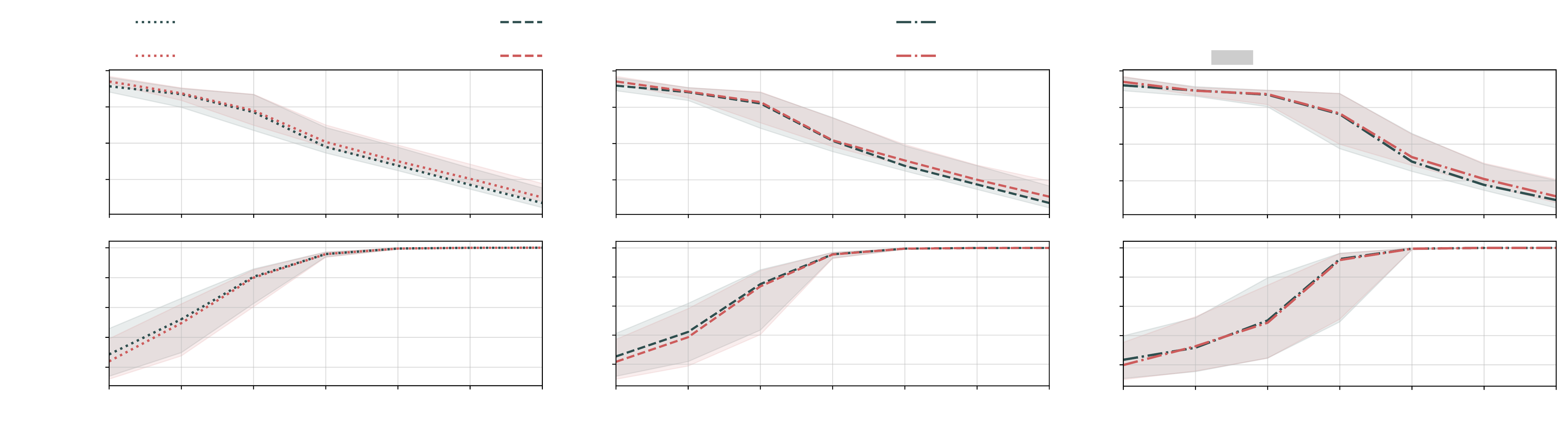}
     \put (11.75,26.25) {\small$25\%$ (known)}
    \put (11.75,24.15) {\small$25\%$ (estimated)}
    \put (35.25,26.25) {\small$33\%$ (known)}
    \put (35.25,24.15) {\small$33\%$ (estimated)}
    \put (60.25,26.25) {\small$50\%$ (known)}
    \put (60.25,24.15) {\small$50\%$ (estimated)}
    \put (80.5,24.15) {\small$25\ts{th}\text{ - }75\ts{th}$ percentile}
    \put (1,7)  {\small\rotatebox{90}{$\mathrm{WA}$}}
    \put (1,16.5)  {\small\rotatebox{90}{$\mathrm{nMSE}$}}
    \put (3, 20.75) {\small$10^{-1}$}
    \put (3, 18.4) {\small$10^{-3}$}
    \put (3, 16) {\small$10^{-5}$}

    \put (4, 11.8) {\small$1.0$}
    \put (4, 8) {\small$0.8$}
    \put (4, 4.25) {\small$0.6$}

    \put (16.75 ,0) {\small$\snr$ in $\mathrm{dB}$}
    \put (4.75, 1.75) {\small$-10$}
    \put (11.15, 1.75) {\small$0$}
    \put (15.4, 1.75) {\small$10$}
    \put (19.9, 1.75) {\small$20$}
    \put (24.5, 1.75) {\small$30$}
    \put (29.15, 1.75) {\small$40$}
    \put (33.75, 1.75) {\small$50$}

    \put (49 ,0) {\small$\snr$ in $\mathrm{dB}$}
    \put (37.1, 1.75) {\small$-10$}
    \put (43.5, 1.75) {\small$0$}
    \put (47.6, 1.75) {\small$10$}
    \put (52.2, 1.75) {\small$20$}
    \put (56.8, 1.75) {\small$30$}
    \put (61.4, 1.75) {\small$40$}
    \put (66.15, 1.75) {\small$50$}

    \put (81.4 ,0) {\small$\snr$ in $\mathrm{dB}$}
    \put (69.35, 1.75) {\small$-10$}
    \put (75.8, 1.75) {\small$0$}
    \put (80, 1.75) {\small$10$}
    \put (84.5, 1.75) {\small$20$}
    \put (89.1, 1.75) {\small$30$}
    \put (93.8, 1.75) {\small$40$}
    \put (98.4, 1.75) {\small$50$}

     \end{overpic}
    \caption{\textbf{Shuffled Spike Signals:} Performance for varying $\mathrm{SNR}$ and \textbf{known} support (gray) as well as \textbf{estimated} support (red). The left line plots show the mean weighted accuracies ($\mathrm{WA}$) and the right line plots show the median $\mathrm{nMSE}$s, where the shaded area corresponds to the $25$\ts{th} and $75$\ts{th} percentile, for different amounts of shuffled samples. The signals have been generated with the following parameters: $K_1\,=\,K_2=\,2,\:N\,=\,121.$}
    \label{fig:random_shuffle_known_est_params}
\end{figure*}

\subsection{Experiment I: Shuffled Spike Signals}
\label{sec:exp1}
First, we consider the reconstruction of two-channel signals $\bm x_1,\: \bm x_2$ that have a DFT as in \eqref{eq:Xl}, i.e., the signals are given as uniformly sampled LP filtered streams of Diracs. To investigate the influence of the propagation of errors made during the support estimation on the performance of the shuffled regression, we compare the cases where the support is known precisely and where it has to be estimated. 

We let the support of each signal consist of two shifted Diracs ($K_m = 2$), where $\{t_{m,k},\:a_{m,k}\}_{k=0}^{K_m-1}$ are sampled from uniform distributions with a minimum separation $\Delta t$ as described in Section~\ref{sec:metrics}. We consider signals with $N=121$ samples per period, and the conducted experiments consist of 1000 Monte-Carlo runs. The results are displayed in Fig.~\ref{fig:random_shuffle_known_est_params}, where the $\mathrm{SNR}$ ranges from $-10\,\si{\dB}$ to $50\,\si{\dB}$ in $10\,\si{\dB}$ steps and either $25 \%$ (dotted lines),$\:33 \%,$ (dashed lines) or $50 \%$ (dashed-dotted lines) of the samples have been shuffled. Gray and red lines correspond to the case where the support is known precisely, and where it has to be estimated, respectively.

Most noticably, we observe that the results for the case, where the support is precisely known (gray curves) and for the case, where the support has to be estimated (red curves) deviate only slightly from each other. It can be observed that for high $\snr\geq20\,\db$, the median $\nmse$ as well as the $25\ts{th}$ and $75\ts{th}$ percentiles are virtually indistinguishable. A phase transition in the $\nmse$ can be observed at $\snr\leq20\,\si{\dB}$ for any amount of shuffling. After the phase transition, the median $\nmse$ drops with approximately $10^{-1}$ per $10\,\si{\db}$ increase in $\snr$. Similarly, the median $\wa$ for $\snr\leq20\,\db$ approaches $1.00$ for any amount of shuffling, while for $10\db$ and below, a more significant performance loss can be observed.

\subsection{Experiments II: Application in Neuroimaging}
Now, we turn our attention to an application in computational neuroscience related to whole-brain calcium imaging, and evaluate the proposed method in such a setting. Calcium imaging is a technique that allows the firing activity of large populations of neurons to be measured using fluorescent calcium indicators \cite{chen_ultra-sensitive_2013}. When a neuron produces an action potential, there is an exchange of calcium ions across the cell membrane, resulting in an increase of calcium concentration $[\text{Ca}^{+2}]$ in the cell body. These fluorescent markers bind to calcium ions and therefore allow us to measure a surrogate of a neuron's firing activity. The observed fluorescent traces can be well-approximated as a stream of decaying exponentials \cite{vogelstein-etal-2010,chen_ultra-sensitive_2013}. We rely on this model and assume that the continuous-time fluorescent trace of a neuron $z(t)$ can be modelled as the convolution of a periodic stream of $K$ Dirac delta functions with a normalized period of one $s(t)$ and a causal decaying exponential shaping kernel $g(t)$ as:
\begin{equation}
\label{eq:ft}
z(t)\,=\,s(t)*g(t)\,=\,\sum_{i\in\mathbb{Z}}\sum_{k=0}^{K-1}a_k\delta(t-t_k-i)*{e^{-\alpha t}u(t)},
\end{equation}
where $a_k\in\mathbb{R}_+$ and $t_k \in [0,1)$ correspond to the amplitudes and firing instants of the neuron, $\delta(t)$ is the Dirac delta function, $u(t)$ is the unit-step function, and $\alpha>0$ is the decay factor of the exponential. Note that the FS coefficients of $z(t)$ are given by $Z_\ell = S_\ell / (\alpha + j2\pi \ell)$ for $\ell\in\mathbb{Z}$ where $S_\ell = \sum_k a_k \exp{(-j2\pi \ell t_k )}$ correspond to the FS coefficients of the stream of Diracs which follows the model in \eqref{eq:Xl}. Using the sampling setup in Fig.~\ref{fig:shuffling} and assuming that the sampling kernel $\varphi(t)$ is an ideal LP filter it can be shown \cite{bejar_finite_2020} that the sequence of samples $x_n = \left.z(t)\ast\varphi(-t)\right|_{t=n}, n=0,\ldots,N-1$ has a DFT that coincides with the FS coefficients of $z(t)$, that is $X_\ell = \sum_n z_n \exp{(-j2\pi n\ell/N)} = Z_\ell$. 
Thus, the only modification we need to make to Algorithm~\ref{algo:RTSE} so that it can continue to be used is to compensate for the attenuation of the FS coefficients caused by the shaping kernel $g(t)$.
More precisely, given shuffled observations with $\tilde{Y}_{\Sigma,\ell}=\Tilde{X}_{\Sigma,\ell}$, the estimate of the FS coefficients of the sparse signal is:
$$\hat{S}_{\Sigma,\ell} =(\alpha + j2\pi \ell)\Tilde{X}_{\Sigma,\ell},\quad \ell=0,\ldots,N-1.$$
After denoising $\hat{S}_{\Sigma,\ell}$, a line spectral estimation method can be applied to estimate the locations $t_{\Sigma,k}$ (and hence the sensing matrix) as discussed in Section~\ref{ssec:rob_est}.

\paragraph*{Signal Generation} In real calcium imaging applications, a common sampling frequency is $f_s = 30\,\si{\Hz}$, which we will assume in our numerical experiments. A thorough analysis of the decaying parameters (i.e., its corresponding intensity half-life $\tau_{1/2}$), which occur in different fluorescent proteins, has been conducted, e.g., in \cite{chen_ultra-sensitive_2013}. Their analysis shows that $\tau_{1/2}$ ranges between $0.2\, \si{\s}$ and $1\,\si{\s}$ depending on the neuron's firing rate and the used fluorescent protein. Since our signal model assumes a canonical period of $T=1$, and thus the sampling rate of the model corresponds to $N$, the decay parameter $\alpha$ is computed from $f_s$, $N$ and $\tau_{1/2}$ as $\alpha\,=\,\mathrm{ln}(2){N}/({\tau_{1/2}f_s}).$
Unless stated otherwise, each signal's locations and weights are sampled from uniform distributions on $[0,1)$ and $[0.5,1]$, respectively, as described in Section~\ref{sec:metrics}. We set $N=121$ and $\alpha\,=\,11.18$, corresponding to a half-life of $\tau_{1/2}\,=\,0.25\,\si{\s}\:$. The noiseless signals are generated as the IDFT of $Z_\ell$, and all experiments consist of $1000$ random realizations.
\subsubsection{Reconstructing Shuffled Streams of Decaying Exponentials}

\begin{figure*}
    \begin{overpic}[width=\textwidth,tics=5]{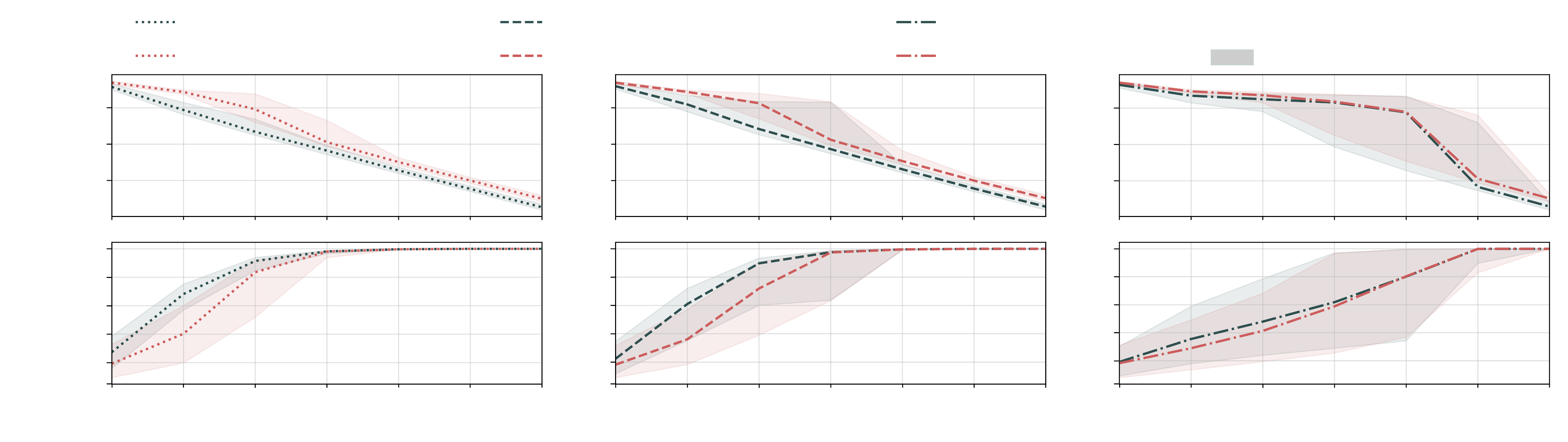}
     \put (11.75,26.25) {\small$25\%$ (known)}
    \put (11.75,24.05) {\small$25\%$ (estimated)}
    \put (35.25,26.25) {\small$33\%$ (known)}
    \put (35.25,24.05) {\small$33\%$ (estimated)}
    \put (60.25,26.25) {\small$50\%$ (known)}
    \put (60.25,24.05) {\small$50\%$ (estimated)}
    \put (80.5,24.05) {\small$25\ts{th}\text{ - }75\ts{th}$ percentile}
    \put (1,7)  {\rotatebox{90}{$\mathrm{WA}$}}
    \put (1,16.5)  {\rotatebox{90}{$\mathrm{nMSE}$}}
    \put (3, 20.6) {\small$10^{-1}$}
    \put (3, 18.2) {\small$10^{-3}$}
    \put (3, 16) {\small$10^{-5}$}

    \put (4, 11.75) {\small$1.0$}
    \put (4, 8.15) {\small$0.8$}
    \put (4, 4.5) {\small$0.6$}

    \put (16.75 ,0) {\small$\snr$ in $\mathrm{dB}$}
    \put (4.9, 1.75) {\small$-10$}
    \put (11.3, 1.75) {\small$0$}
    \put (15.45, 1.75) {\small$10$}
    \put (20, 1.75) {\small$20$}
    \put (24.6, 1.75) {\small$30$}
    \put (29.1, 1.75) {\small$40$}
    \put (33.7, 1.75) {\small$50$}

    \put (49 ,0) {\small$\snr$ in $\mathrm{dB}$}
    \put (37.1, 1.75) {\small$-10$}
    \put (43.4, 1.75) {\small$0$}
    \put (47.65, 1.75) {\small$10$}
    \put (52.1, 1.75) {\small$20$}
    \put (56.7, 1.75) {\small$30$}
    \put (61.3, 1.75) {\small$40$}
    \put (65.85, 1.75) {\small$50$}

    \put (81, 0) {\small$\snr$ in $\mathrm{dB}$}
    \put (69.2, 1.75) {\small$-10$}
    \put (75.5, 1.75) {\small$0$}
    \put (79.8, 1.75) {\small$10$}
    \put (84.2, 1.75) {\small$20$}
    \put (88.8, 1.75) {\small$30$}
    \put (93.36, 1.75) {\small$40$}
    \put (98, 1.75) {\small$50$}

     \end{overpic}
    \caption{\textbf{Shuffled Decaying Exponentials:} Performance for different levels of $\mathrm{SNR}$ and varying numbers of shuffled samples. The plots show the median as well as the $25$\ts{th} and $75$\ts{th} percentiles of the $\mathrm{nMSE}$ (top) and $\mathrm{WA}$ (bottom) for $25\%$ (left), $33\%$ (middle) and $50\%$ (right) shuffled samples for signals with the following parameters: $K_1\,= K_2\,=\,2,\: \alpha\,=\,11.18,\:N\,=\,121$.}
    \label{fig:decaying_exponentials}
\end{figure*}

In this experiment, we investigate the performance of Algorithm~\ref{algo:RTSE} for the reconstruction of simulated shuffled calcium traces (i.e., streams of shuffled decaying exponentials). For comparison, the experiments follow the same structure as with the streams of Diracs. As before, the locations of each signal are sampled from uniform distributions with a minimum separation $\Delta t$ (see Section~\ref{sec:metrics}). The results are displayed in Fig.~\ref{fig:decaying_exponentials}, where the $\mathrm{SNR}$ ranges from $-10\,\si{\dB}$ to $50\,\si{\dB}$ in $10\,\si{\dB}$ steps and either $25 \%$ (dotted lines),$\:33 \%,$ (dashed lines) or $50 \%$ (dashed-dotted lines) of the samples have been shuffled. Gray and red lines correspond to the cases where the support is known precisely and where it has to be estimated, respectively.

We observe a phase transition of the median and the percentiles of the $\nmse$ at varying amounts of shuffling, i.e., as the proportions of shuffled samples increases, the phase transition occurs for increasing $\snr$. On average, the proposed method is able to produce satisfactory results in terms of signal reconstruction and sample assignments for $\mathrm{SNR}\,\geq\,20\,\si{\dB}$ and $25\,\%$ shuffling, as well as for $\mathrm{SNR}\,\geq\,30\,\si{\dB}$ and $\approx33.3\,\%$ shuffling. For $50\,\%$ shuffled samples highly accurate results can only be achieved for very high $\snr$ ($>40\,\db$). The loss in performance ($\nmse$) from estimating the locations is significantly larger than for the spike signals, due to the attenuation of the decaying exponential kernel.
\subsubsection{Benchmark} 
\begin{figure}
    \flushright
        \begin{overpic}[width=0.47\textwidth,tics=5]{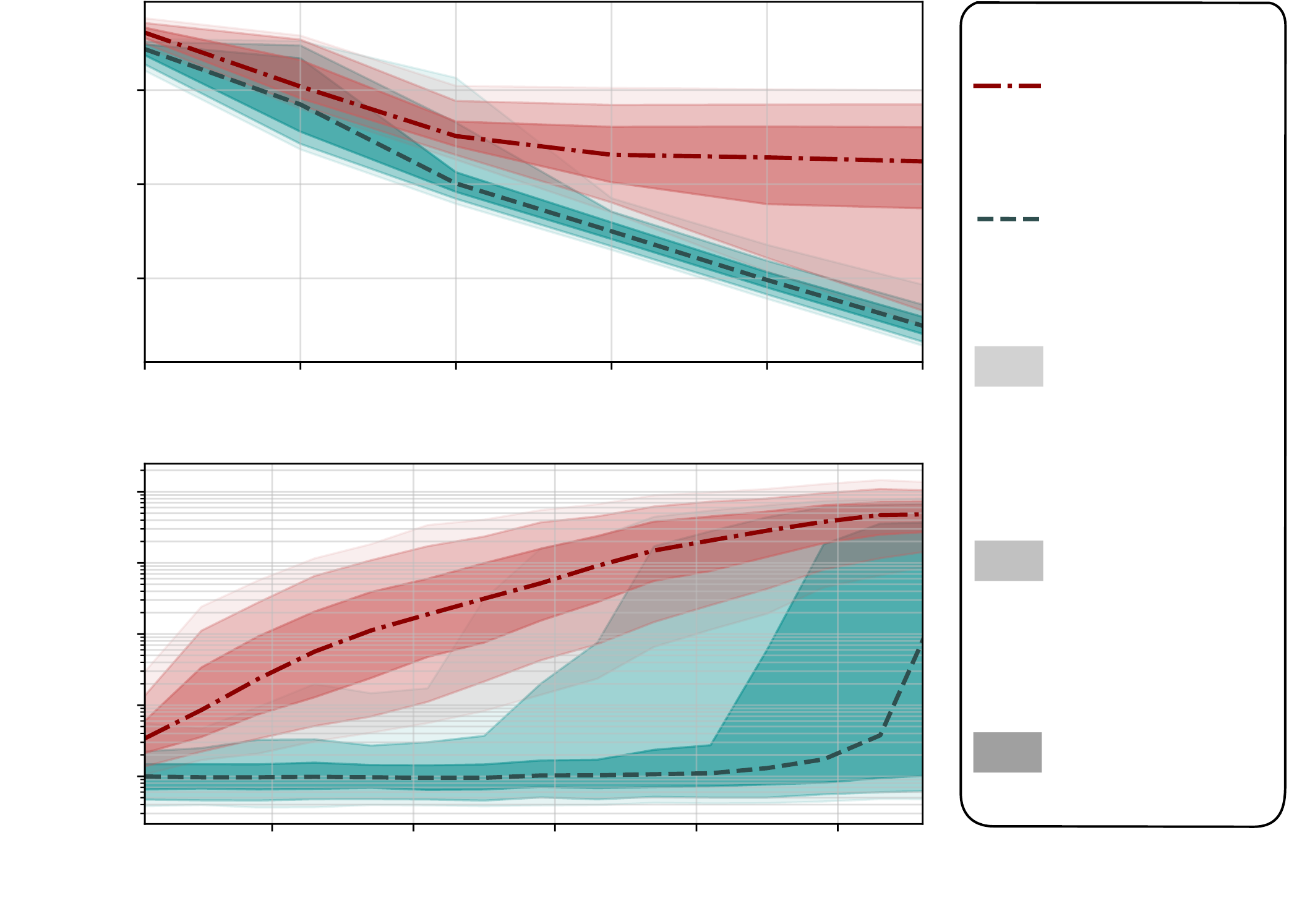}
        \put (10.5,40)  {\small$0$}
        \put (21.5,40)  {\small$10$}
        \put (33.5,40)  {\small$20$}
        \put (45.5,40)  {\small$30$}
        \put (57.75,40)  {\small$40$}
        \put (69.75,40)  {\small$50$}
        \put (32.75,36.5)  {\small$\mathrm{SNR}$ in $\mathrm{dB}$}

        \put (19.25,4)  {\small$10$}
        \put (30,4)  {\small$20$}
        \put (41,4)  {\small$30$}
        \put (52,4)  {\small$40$}
        \put (63,4)  {\small$50$}
        \put (35,0)  {\small\#Shuffled}

        \put (4,63)  {\small$10^{\text{-}1}$}
        \put (4,55.75)  {\small$10^{\text{-}3}$}
        \put (4,48.5)  {\small$10^{\text{-}5}$}
        \put (0,51)  {\rotatebox{90}{\small$\mathrm{nMSE}$}}

        \put (4,32) {\small$10^0$}
        \put (4,26.5) {\small$10^{\text{-}1}$}
        \put (4,20.95)  {\small$10^{\text{-}2}$}
        \put (4,15.5)  {\small$10^{\text{-}3}$}
        \put (4,10)  {\small$10^{\text{-}4}$}
        \put (0,16.3)  {\rotatebox{90}{\small$\mathrm{nMSE}$}}

        \put (81.5,65.75)  {\footnotesize HardEM\scriptsize\cite{abid_stochastic_2018}}
        \put (81.5,62.75)  {\footnotesize ($\mathrm{median}$)}
        \put (81.5, 55.25)  {\footnotesize Proposed}
        \put (81.5, 52)  {\footnotesize ($\mathrm{median}$)}
        \put (81.5,44)  {\footnotesize $5\ts{th}\text{ - }95\ts{th}$}
        \put (81.5,41)  {\footnotesize percentile}
        \put (81.5,29)  {\footnotesize $10\ts{th}\text{ - }90\ts{th}$}
        \put (81.5,26)  {\footnotesize percentile}
        \put (81.5,14)  {\footnotesize $25\ts{th}\text{ - }75\ts{th}$}
        \put (81.5,11)  {\footnotesize percentile}

    \end{overpic}
    \caption{\textbf{Benchmark:} Comparison of the proposed method (green) and HardEM \cite{abid_stochastic_2018} (red) for estimated locations in terms of $\mathrm{nMSE}$. In the top plot, $\approx10\%$ of the samples have been shuffled for varying $\mathrm{SNR}$. In the bottom plot, the $\mathrm{SNR}$ is fixed to $30\,\si{\dB}$, while the number of shuffled samples varies. The signals are generated according to the following parameters: $K_1\,=\,K_2\,=\,2,\:\alpha\,=\,11.18,\:N\,=\,121$.}
    \label{fig:benchmark_sig}
\end{figure}
Two experiments comparing the proposed robust estimation method against the HardEM algorithm \cite{abid_stochastic_2018} for shuffled linear regression have been conducted, by replacing the shuffled regression step of the proposed algorithm (treating $\hat{\bm A}$ as ${\bm A}$) with the HardEM algorithm. The HardEM algorithm has been proposed as an approach to solve a shuffled regression problem for general permutation matrices, hence it does not take advantage of the known structure on the permutation matrices that is given in our setup.
Thus, only the $\mathrm{nMSE}$ is considered as evaluation metric since the $\mathrm{WA}$ has not been defined for general permutations. The signal generation follows the same input parameters as before. In the second experiment, the number of randomly shuffled samples increases for a fixed $\mathrm{SNR} \,=\, 30\,\si{\dB}$. The performance of both methods is displayed in Fig.~\ref{fig:benchmark_sig}, where the top plot shows the $\mathrm{nMSE}$ for a fixed number of $12$ shuffled samples and varying $\mathrm{SNR}$ ($0\,\db$ to $50\,\db$ in $10\,\db$ steps) and the bottom plot displays the $\mathrm{nMSE}$ for a fixed $\mathrm{SNR}$ of $30\,\db$ and varying amounts of shuffled samples ($1$ to $57$ in steps of $4$). While the average performance of the HardEM method already begins to deteriorate for $10$ shuffled samples, the average $\mathrm{nMSE}$ of the proposed Method does not increase significantly until more than $40$ samples are shuffled. On a similar note, for increasing $\mathrm{SNR}$ the average $\mathrm{nMSE}$ of the HardEM method saturates around $2*10^{-3}$, while for the proposed method it decreases by a constant factor of $\approx10^{-1}$ per $10\,\si{\dB}$ increase in $\mathrm{SNR}$.
\subsubsection{Refinement and Lower Bounds}

\begin{figure*}
\begin{overpic}[width=0.99\textwidth,tics=1]{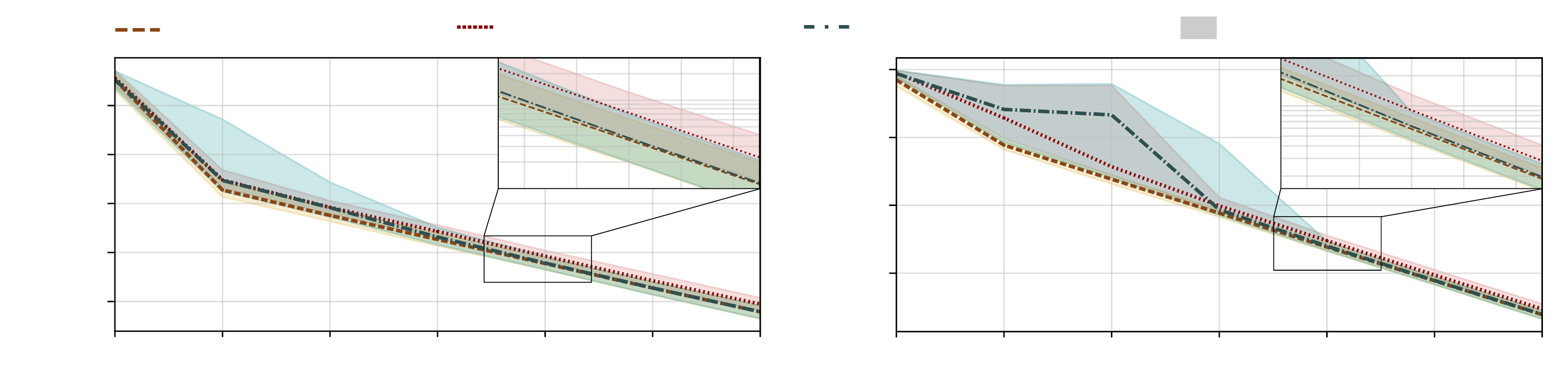}
\put (3,5)  {\small$10^{\text{-}10}$}
\put (3,8.25)  {\small$10^{\text{-}8}$}
\put (3,11.25)  {\small$10^{\text{-}6}$}
\put (3,14.5)  {\small$10^{\text{-}4}$}
\put (3,17.5)  {\small$10^{\text{-}2}$}
\put (0,10)  {\small\rotatebox{90}{$\mathrm{nMSE}$}}

\put (53,6.9)  {\small$10^{\text{-}6}$}
\put (53,11.25)  {\small$10^{\text{-}4}$}
\put (53,15.6)  {\small$10^{\text{-}2}$}
\put (53,19.9)  {\small$10^{0}$}
\put (50,10)  {\small\rotatebox{90}{$\mathrm{nMSE}$}}

\put (56.75,2)  {\small$0$}
\put (63.25,2)  {\small$10$}
\put (70,2)  {\small$20$}
\put (76.9,2)  {\small$30$}
\put (83.75,2)  {\small$40$}
\put (90.65,2)  {\small$50$}
\put (97.5,2)  {\small$60$}
\put (73.5,0)  {\small$\mathrm{SNR}$ in $\mathrm{dB}$}

\put (6.85,2)  {\small$0$}
\put (13.35,2)  {\small$10$}
\put (20.1,2)  {\small$20$}
\put (27,2)  {\small$30$}
\put (33.85,2)  {\small$40$}
\put (40.75,2)  {\small$50$}
\put (47.6,2)  {\small$60$}
\put (23.6,0)  {\small$\mathrm{SNR}$ in $\mathrm{dB}$}

\put (78.25,22.75)  {\small$25\ts{th}\text{ - }75\ts{th}$ percentile}
\put (55.25, 22.75)  {\small reassigned ($\mathrm{median})$}
\put (32.25, 22.75)  {\small added ($\mathrm{median})$}
\put (11, 22.75)  {\small no shuffling ($\mathrm{median})$}

\end{overpic}
\caption{\textbf{Refinement and Lower Bounds:} Left and right plots correspond to approximately $33\%$ shuffled samples. (Left) Normalized MSE of the estimated support using unshuffled individual sample vectors (brown), added sample vectors (red) and reassigned individual sample vectors according to the estimated assignment (blue). (Right) Signal reconstruction errors when: sample vectors are not shuffled (brown); Algorithm~\ref{algo:RTSE} is applied to shuffled sample vectors, i.e., the support is estimated from added samples (red); Step 2 (unlabeled sensing) of Algorithm~\ref{algo:RTSE} is applied to reassigned samples and the support is estimated from individual channels after the reassignment (blue). The signals were generated according to the following parameters: $K_1\,=\,K_2\,=\,2,\:\alpha\,=\,11.18,\:N\,=\,121$.}
\label{fig:reest}
\end{figure*}
A natural approach to possibly increase the SSSR methods's performance is to introduce an additional refinement step, where the support is estimated from individual channels with reassigned samples according to the estimate $\hat{\bm \Pi}_{\bm q}$ resulting from Algorithm~\ref{algo:RTSE}. This new estimate of the matrix $\bm A$ and the reassigned samples then provide the input to Step 2 (shuffled regression) of the SSSR. This section addresses this by comparing such a refinement with the estimate from Algorithm~\ref{algo:RTSE} and with the estimate using spectral line estimation on $\tilde{\bm x}_1,\;\tilde{\bm x}_2$, which provides a lower bound on the performance of the proposed methods. The evaluation accounts for the performance w.r.t. support estimation as well as the signal reconstruction.

Each signal consists of two decaying exponentials, where $40\;(\approx33\%)$ samples are shuffled, and the $\mathrm{SNR}$ ranges from $0\,\si{\dB}$ to $60\,\si{\dB}$. The results are displayed in Fig.~\ref{fig:reest}, where the left plot shows the $\mathrm{nMSE}$ of the estimated locations resulting from applying the ADMM denoising and Prony's method to individual unshuffled samples $\tilde{\bm x}_1$, $\tilde{\bm x}_2$ (brown dashed), added samples $\tilde{\bm y}_\Sigma$ (red dotted) and reassigned individual samples $\hat{{\bm x}}_1 = \hat{\bm Q}\tilde{\bm y}_1+(\bm I - \hat{\bm Q})\tilde{\bm y}_2$, $\hat{{\bm x}}_2 = \hat{\bm Q}\tilde{\bm y}_2+(\bm I - \hat{\bm Q})\tilde{\bm y}_1$ (blue dashed-dotted). For any $\mathrm{SNR}\,\geq\,10\,\si{\dB}$, the estimation error for individual channels is approximately half of the $\mathrm{nMSE}$ for estimates using $\tilde{\bm x}_\Sigma$, i.e., the refinement may lead at best to a $3\,\si{\dB}$ gain. Indeed, we observe that the median $\mathrm{nMSE}$ of the estimated locations from reassigned samples approaches the lower bound as the $\mathrm{SNR}$ increases, as can be seen in the zoom-in in top right corner of the plot. 

The right plot shows the performance of the entire signal reconstruction, when samples are correctly assigned and the support is estimated from $\tilde{\bm x}_1$, $\tilde{\bm x}_2$ (brown dashed), Algorithm~\ref{algo:RTSE} is applied to shuffled samples $\tilde{\bm y}$ (red dotted) and when these results have been refined by estimating the support from reassigned samples $\hat{{\bm x}}_1$, $\hat{{\bm x}}_2$, which serve together with the estimated support as input to Step 2 (unlabeled sensing) of Algorithm~\ref{algo:RTSE} (blue dashed-dotted). Similar to the support estimation, the refinement leads at most to a $3\,\si{\dB}$ gain. However, due to the propagation of errors between Step 1 and Step 2 of the SSSR, we observe larger variation in the performance of Algorithm~\ref{algo:RTSE} and its refinement. A phase transition in the median $\mathrm{nMSE}$ occurs around $20\,\si{\dB}$, and the $\mathrm{nMSE}$ shows high variation even at $30\,\si{\dB}$. However, for more than $30\,\db\:\snr$, the lower bound is being tightly approached by the assignment as can be observed in the zoom-in in the top-right corner of the plot.

Although this experiment shows that for moderate levels of noise and shuffling as well as larger amounts of shuffled samples and high $\mathrm{SNR}$, a refinement does lead to an increase in performance of signal reconstruction, the gain to be made is at most $3\,\si{\dB}$. 
Note that for $M > 2$ signals, the loss in performance may be larger due to the oversampling factor (in the unshuffled case) increasing as a multiple of $M$ and for equal noise variances $\sigma_\Sigma^2 = \sum_m\sigma_m^2 = M\sigma_1^2$. However, such an analysis is beyond the scope of this work.

\subsubsection{Artificially Shuffled Real Data}

\begin{figure*}
    \flushright
    \begin{overpic}[width=0.9\textwidth,tics=5]{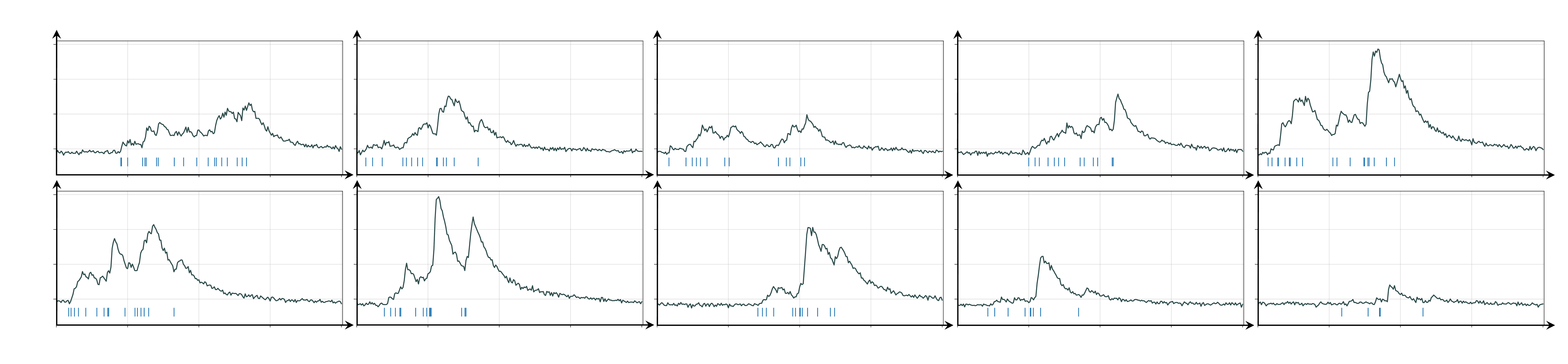}
        \put (1,21.5) {\small(a)}
        \put (1,4) {\footnotesize \rotatebox{90}{$\Delta F/F_0$}}
        \put (1,13.5) {\footnotesize \rotatebox{90}{$\Delta F/F_0$}}
        \put (12.5,1) {\footnotesize t}
        \put (31.65,1) {\footnotesize t}
        \put (50.8,1) {\footnotesize t}
        \put (69.9,1) {\footnotesize t}
        \put (89.05,1) {\footnotesize t}
    \end{overpic}
    \centering
    \begin{overpic}[width=0.88\textwidth,tics=5]{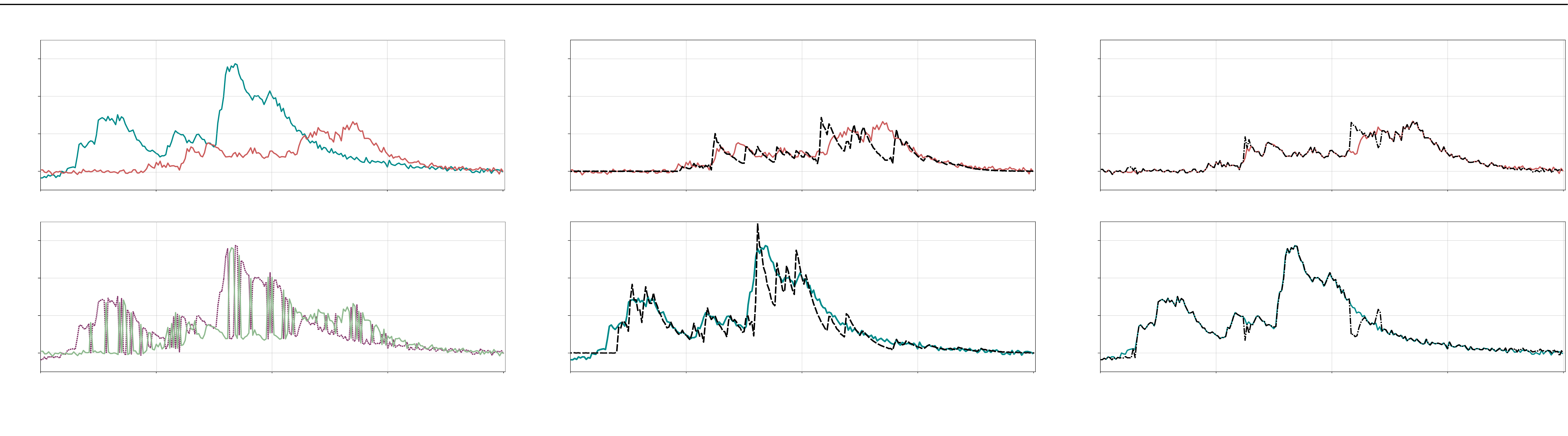}
        \put (0, 25.5) {\small(b)}
        \put (0,6) {\footnotesize \rotatebox{90}{$\Delta F/F_0$}}
        \put (0, 17.5) {\footnotesize \rotatebox{90}{$\Delta F/F_0$}}
        \put (17.05, 2.5) {\footnotesize t}
        \put (50.95, 2.5) {\footnotesize t}
        \put (84.65, 2.5) {\footnotesize t}
        \put (12, 25.8) {\footnotesize True Traces}
        \put (12, 14.1) {\footnotesize Shuffled (25$\%$)}
        \put (45.5, 25.8) {\footnotesize Reconstruction}
        \put (46, 0.7) {\footnotesize $R^2=0.8088$}
        \put (80.15, 25.8) {\footnotesize Assignment}
        \put (78.7, 0.7) {\footnotesize $\mathrm{WA}=0.9444$}
        \put (5, 0.7) {\footnotesize $\alpha_{est}=7.029$ (\footnotesize $\tau_{est}=0.396\,s$)}
    \end{overpic}
    \flushleft
    \caption{\textbf{Artificially Shuffled Real Data:} (a) Collection of approx. $4\,\si{\s}$ long windows of true traces (relative intensity of the fluorescent traces $\Delta F/F_0$). Blue vertical bars show the locations of action potentials extracted from electrophysiological recordings; (b) Example of two randomly selected unshuffled traces given by green and red solid lines (top left). Specified amount of samples are shuffled at random (bottom left); Example of reconstructed signals with associated $R^2$ given by dashed lines (middle); Example of estimated sample assignment with associated $\mathrm{WA}$ given by dashed-dotted lines (right).}
    \label{fig:real_data}
\end{figure*}
\label{ssec:real_data}
\begin{figure}
    \flushright
    \begin{overpic}[width=0.48\textwidth,tics=5]{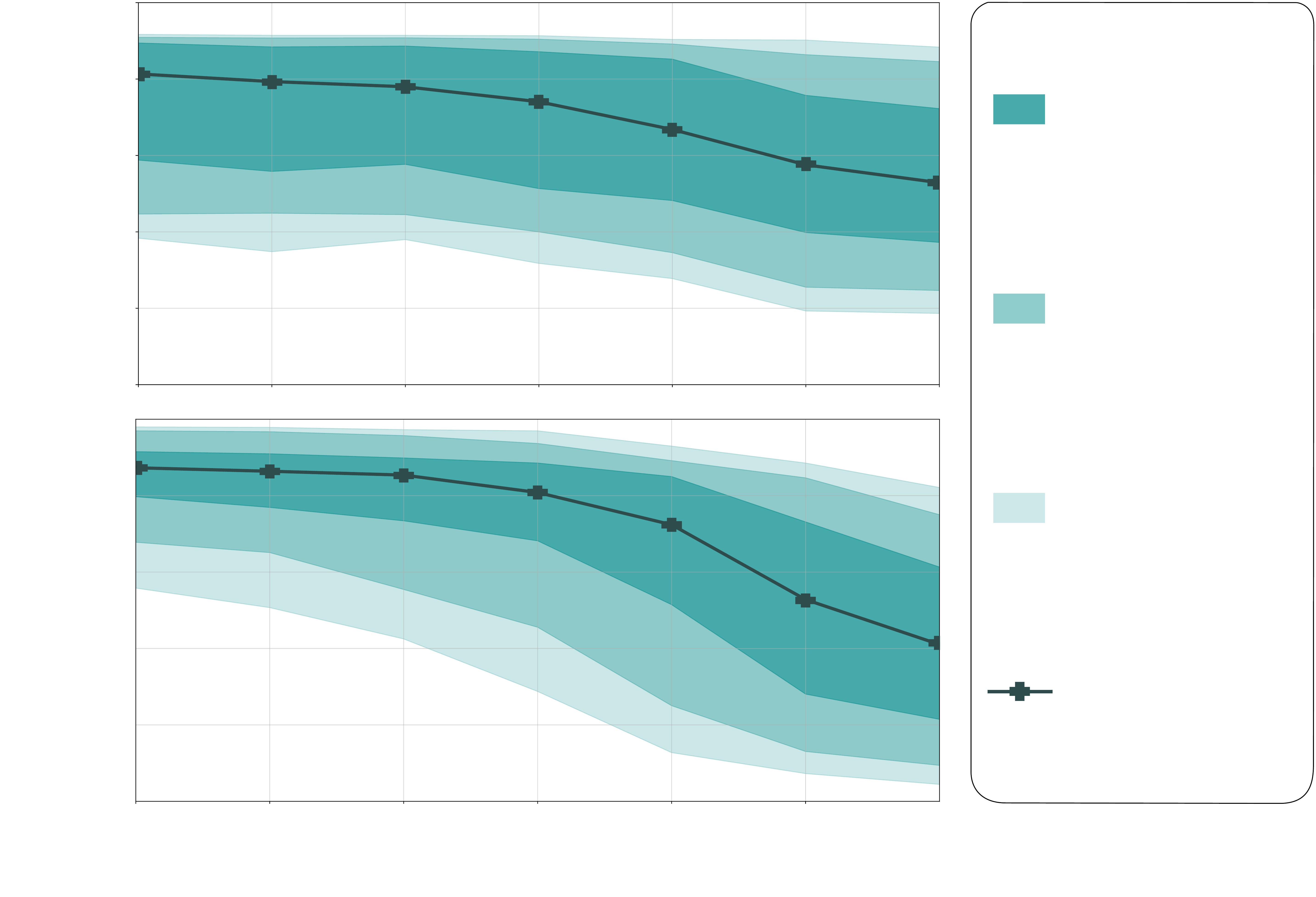}
        \put (80.5, 60.2) {$25\ts{th}$ - $75\ts{th}$}
        \put (80.5, 57) {percentile}
        \put (80.5, 45.2) {$10\ts{th}$ - $90\ts{th}$}
        \put (80.5, 42) {percentile}
        \put (80.5, 30.2) {$5\ts{th}$ - $95\ts{th}$}
        \put (80.5, 27) {percentile}
        \put (80.75, 14.8) {median}
        
        \put (0, 52) {\rotatebox{90}{$R^2$}}
        \put (4, 29.5) {{0.9}}
        \put (4, 23.5) {{0.8}}
        \put (4, 17.5) {{0.7}}
        \put (4, 11.75) {{0.6}}

        \put (0, 18) {\rotatebox{90}{$\mathrm{WA}$}}
        \put (4, 61) {{0.8}}
        \put (4, 55.1) {{0.6}}
        \put (4, 49.5) {{0.4}}
        \put (4, 43.5) {{0.2}}

        \put (33, 0) {\#Shuffled}
        \put (18.5, 3.5) {$20$}
        \put (28.65, 3.5) {$40$}
        \put (38.75, 3.5) {$60$}
        \put (49.1, 3.5) {$80$}
        \put (58.2, 3.5) {$100$}
        \end{overpic}
        
     \caption{\textbf{Artificially Shuffled Real Data - Results:} Average results for $1000$ Monte Carlo runs for different amounts of shuffled samples. Top plot shows the median and percentiles of $R^2$, while the bottom plot shows $\mathrm{WA}$.}
    \label{fig:real_data_results}
\end{figure}
In this section, we apply the proposed method on artificially shuffled calcium imaging traces from mice taken from the cai-1 dataset \cite{genie_project_simultaneous_2015}. Before beginning the estimation, we initially subtract the baselines of the individual signals. The dataset includes simultaneous electrophysiological recordings, from which the spike counts are extracted. Then, $\alpha$ is estimated from the sum of the signals by means of the golden section search algorithm as described in \cite{bejar_finite_2020}. With this estimate of the decay and the extracted spike counts $K_\Sigma$, the two-step SSSR method can be used to estimate the signals and unshuffle the samples. Note that for this application, the weights of the spikes can only take nonnegative values. However, the MM-estimator as an unconstrained estimator may yield some negative coefficients due to the presence of noise. Simply forcing these coefficients to zero results, however, in a bad fit of the estimated signals ($R^2\approx0.2$). Instead, we enforce nonnegativity by replacing the least squares solution in each iteration of the IRWLS algorithm with a nonnegative LS solution. Fig.~\ref{fig:real_data}~(a) shows the traces that are used in the experiment. Each trace consists of $241$ samples; at a sampling frequency of $60\,\si{\Hz}$ this yields a length of approximately $4\,\si{\s}$. In each trial, two traces are randomly selected from this set and shuffled at random according to the specified number of shuffled samples which is illustrated in the left plots of Fig.~\ref{fig:real_data}~(b). Then, the SSSR is applied to the resulting shuffled traces, returning an assignment of the samples and an estimate of the true signals, as shown in the middle and right plots of Fig.~\ref{fig:real_data}~(b). Since the true signals are unknown, we report the $R^2$ as a measure of the signal fit and the weights of the $\mathrm{WA}$ are computed w.r.t. the traces. The average results for $1000$ trials are shown in Fig.~\ref{fig:real_data_results}, where the amount of shuffled samples is varied. On average, we observe a highly accurate performance of the proposed method up to $80$ shuffled samples ($\approx33\,\%$), which matches the observations made in the numerical simulations for moderate levels of noise. Interestingly, the average weighted accuracy of the estimated assignment is above $0.9$ even for larger disagreements between the estimated model and the true traces, suggesting a higher tolerance of the assignment against even larger deviations of the model from ground truth.
\section{Conclusion}
\label{sec:conclusion}
We introduced a framework for the reconstruction of shuffled multi-channel signals termed structured unlabeled sensing, and we derived sufficient conditions for the existence of a unique solution up to an ordering of the channels. We showed that this framework also applies to shuffled multi-channel signals that admit a sparse representation in an overcomplete dictionary, where the sensing matrix is not precisely known a priori. For such signals, we proposed a two-step procedure for shuffled sparse signal recovery (SSSR) that combines sparse signal estimation and robust regression and illustrated its effectiveness in an application related to calcium imaging. Our methodology could be generalized to sparse signal representations other than the ones considered in this work and find application in a variety of real-world problems, such as multiple target tracking or contactless vital signs monitoring of multiple people, where imprecise measurement and channel assignments are present. However, further work is required to establish performance guarantees and to consider scenarios involving a higher number of channels.

\small

\bibliographystyle{ieeetr}
\bibliography{bibliography.bib}

\end{document}